\newtheorem{theorem}{Theorem}[section]
\newtheorem{lemma}[theorem]{Lemma}
\newtheorem{corollary}[theorem]{Corollary}
\newtheorem{definition}[theorem]{Definition}
\begin{document}
%
\title{Efficient Soft-Input Soft-Output Tree  Detection \\ Via an Improved Path Metric}

\author{\IEEEauthorblockN{ Jun Won Choi*, Byonghyo Shim,~\IEEEmembership{Senior Member,~IEEE,} \\ and Andrew C. Singer,~\IEEEmembership{Fellow,~IEEE}}
\thanks{J. W. Choi and A. C. Singer are with Dept. of Electrical and Computer Engineering, University of Illinois at Urbana-Champaign. B. Shim is with EECS Dept., Korea Univ., Seoul, Korea.}
}



%
%


\maketitle

\begin{abstract}
  Tree detection techniques are often used to reduce the complexity of \emph{a posteriori probability} (APP) detection in high dimensional multi-antenna wireless communication systems. In this paper, we introduce an efficient soft-input soft-output tree detection algorithm that employs a new type of look-ahead path metric in the computation of its branch pruning (or sorting). While conventional path metrics depend only on symbols on a visited path, the new path metric accounts for unvisited parts of the tree in advance through an unconstrained linear estimator and adds a bias term that reflects the contribution of as-yet undecided symbols. By applying the linear estimate-based look-ahead path metric to an $M$-algorithm that selects the best $M$ paths for each level of the tree we develop a new soft-input soft-output tree detector, called an \emph{improved soft-input soft-output $M$-algorithm} (ISS-MA).  Based on an analysis of the probability of correct path loss, we show that  the improved path metric  offers substantial performance gain over the conventional path metric. We also demonstrate through simulations that the ISS-MA provides a better performance-complexity trade-off than existing soft-input soft-output detection algorithms.
\end{abstract}

\IEEEpeerreviewmaketitle

\section{Introduction}
\label{sec:introduction}
The relationship between the transmitted symbol and the received signal vector in many communication systems can be expressed in the form
\begin{equation}
\label{eq:start}
\mathbf{y}_{o} = \mathbf{H} \mathbf{x} + \mathbf{n}_{o},
\end{equation}
where $\mathbf{x}$ is the $N \times 1$ transmitted vector whose entries are
chosen from a finite symbol alphabet, $\mathbf{y}_{o}$ and $\mathbf{n}_{o}$
are the $L \times 1$ received signal and noise vectors, respectively, and
$\mathbf{H}$ is $L\times N$ channel matrix.
As a practical decoding scheme when a code constraint is imposed, \emph{iterative detection and decoding} (IDD) has been applied to various digital communication
systems including channel equalization \cite{singer}, multi-input multi-output (MIMO) detection \cite{haykin,hochwald,vikalo2004}, and multi-user detection \cite{wang_poor}.
Motivated by the \emph{turbo principle} \cite{berrou}, an IDD receiver  exchanges soft information between a symbol detector and a channel decoder to achieve performance close to the channel capacity.
%
%
The symbol detector computes \emph{a posteriori} probabilities (APP)
on the bits comprising $\mathbf{x}$, using \emph{a priori} probabilities provided by the channel decoder  and the observation $\mathbf{y}_{o}$.
Then, the detector exchanges this soft information (so called \emph{extrinsic} information) with a soft-input soft-output decoder, such as the
max-log-MAP decoder \cite{logmapsova}.
In the sequel, we refer to such a detector as an \emph{APP detector}.
%
%

Direct computation of the APP involves marginalization over all configurations of the vector $\mathbf{x}$, leading to exponential complexity in the system size (e.g., number of antenna elements in MIMO systems). As a means of approximately
 performing the APP detection at reduced complexity, \emph{tree detection techniques} have received much attention recently \cite{hochwald,vikalo2004,giannakis2006,studer,jalden2005,jong,milliner,hagenauer2007,barbero2008,larsson,wu}. (Refer to \cite{arul} for an overview of tree detection techniques.) The essence of these approaches is to  produce a set of promising symbol candidates via a tree search for estimating the APP over this reduced set.
Thus far, a variety of tree detection algorithms have been proposed.
In \cite{hochwald}, the \emph{sphere decoding algorithm} (SDA) \cite{pohst,hassibi1} with a fixed radius was used to find symbol candidates.
%
%
%
In \cite{vikalo2004}, \emph{a priori} information obtained from the channel decoder was exploited to improve the search efficiency of the SDA.
In \cite{giannakis2006},  a hard sphere decoder was employed to find a single \emph{maximum a posteriori} probability (MAP) symbol estimate maximizing $P\left(\mathbf{x}|\mathbf{y}_{o}\right)$ and a candidate list was generated by flipping bits in the MAP estimate.
In \cite{jalden2005}, the APPs of all bits in $\mathbf{x}$ are obtained simultaneously by modifying a bound tightening rule of a single sphere search. Additionally, a more sophisticated extension of this idea was introduced in \cite{studer}.
The computational complexity of these tree detection algorithms varies depending on the channel and noise realizations, and in the worst-case the search complexity is the same as that of exhaustive search.
 In order to limit the worst case complexity of the tree detection approach, \emph{fixed-complexity tree search} techniques \cite{jalden2007} have been proposed.
%
%
For example, an $M$-algorithm was extended to soft-input soft-output detection in \cite{jong} and an intelligent candidate adding algorithm for improving efficiency of the $M$-algorithm was proposed in \cite{milliner}.
The stack algorithm was also exploited  for list generation in combination with soft augmentation of tail bits of stack elements \cite{hagenauer2007}.
Other fixed-complexity soft-input soft-output detection algorithms can be found in \cite{barbero2008,larsson,wu}.

The $M$-algorithm  \cite{jong,malg}, also known as \emph{$K$-best algorithm} in the MIMO detection literature  \cite{wong,guo},
selects only a finite set of the $M$ best candidates for each layer of the detection tree.
The $M$-algorithm is a practical candidate for soft-input soft-output detection
due to its inherent nature to facilitate parallel and pipelined processing \cite{guo}.
%
%
%
In spite of this benefit, the $M$-algorithm   suffers from a poor performance-complexity trade-off due to
the greedy nature of the algorithm. To be specific, the algorithm checks the validity of paths in the forward direction and never traverses back for reconsideration.
Once the correct path is rejected, it will never be selected again in subsequent selections,  resulting in wasteful search effort.
Moreover, these erroneous decisions often occur in early candidate selection stages
where   the accumulated path metric  considers only  a few symbol spans.
%
One way to alleviate such error propagation is symbol detection ordering \cite{milliner,vblast}. By processing each layer in an appropriate order, the chances of errors propagating to the next stage can be reduced.
Nevertheless, error propagation severely limits the performance of the $M$-algorithm especially when the system size is large.

In this paper, we pursue an improvement of the performance-complexity trade-off  of soft-input soft-output $M$-algorithms.
%
Towards this end, we propose a new path metric capturing the contribution of the entire symbol path.
%
%
%
%
While the conventional path metric accounts for the contributions of symbols along the visited path only, the new path metric looks ahead to
the unvisited paths and estimates their contributions through a soft unconstrained linear symbol estimate.
  In fact, a \emph{bias term} reflecting the information from as-yet undecided symbols is incorporated into the conventional path metric for this purpose.
In order to distinguish this improved path metric from the conventional path metric and other look-ahead metrics, we henceforth refer to it as a \emph{linear estimate-based look-ahead (LE-LA) path metric}.
We apply the LE-LA path metric to the soft-input soft-output $M$-algorithm, introducing an \emph{improved soft-in soft-out $M$-algorithm} (ISS-MA).
By sorting paths based on the LE-LA path metric, the ISS-MA lessens the chance of rejecting the correct path from the candidate list and eventually improves
the detection performance  especially for systems of large dimension.
Indeed, from an analysis of the probability of correct path loss (CPL),
we show that the LE-LA path metric benefits the candidate selection process of the $M$-algorithm.
%

   The idea of using a look-ahead path metric has been explored in artificial intelligence search problems \cite{ai} and can also be found in
soft decoding of linear block codes \cite{hh,ekroot}.
In \cite{stojnic}, computationally efficient methods to obtain the bias term were investigated using semi-definite programming and $H^{\infty}$ estimation techniques.
While these approaches search for a \emph{deterministic} bias term (lower-bound of future cost) to guarantee the optimality of the sequential or depth-first search, our approach uses linear estimation to derive a bias term designed to improve candidate selection of the breadth-first search.
The key advantage of using a linear estimator is that \emph{a priori} information can be easily
incorporated into the bias term so that the look-ahead operation benefits from  the decoder output in each iteration.
It is also worth emphasizing the difference between the proposed path metric and Fano matric  \cite{fano}.
The Fano metric exploits the \emph{a posteriori} probability of  each path as its path metric.
For a binary symmetric channel, the Fano metric introduces a bias term proportional to the path length to penalize paths of short length.
The extension of the Fano metric to channels with memory or MIMO channels is not straightforward, since it involves marginalization over
the distribution of the undecided symbols.
Modification of the Fano metric is considered for equalization of intersymbol interference (ISI) channels in \cite{xiong} and for
multi-input multi-output detection in  \cite{arul}.
As a means to improve path metric of the SDA, the idea of probabilistic pruning was introduced in \cite{gowaikar,shim,ho,wanlun2006}.
In \cite{hagenauer2007},  the probability density of an observed signal estimated from a separate tree search is used as a bias term.
While these approaches assign an equal bias term for paths of the same length, the ISS-MA provides a distinct bias term for each path in the tree, allowing for the application of a breadth-first search such as the $M$-algorithm.
As such, our path metric can be readily combined with any type of tree-based soft-input soft-output detector.

The rest of this paper is organized as follows.
In Section \ref{sec:problem},  we briefly review the IDD system and the tree detection algorithm.
In Section \ref{sec:iss-ma}, we present the LE-LA path metric along with its efficient computation.
We also describe the application of the LE-LA path metric to the soft-input soft-output $M$-algorithm.
In Section \ref{sec:performance}, we present the performance analysis of the ISS-MA.
In Section \ref{sec:simulation}, we provide simulation results and conclude in Section \ref{sec:conclusion}.

We briefly summarize the notation used in this paper.
Uppercase and lowercase letters written in boldface denote matrices and vectors, respectively.
The superscripts $(\cdot)^{T}$ and $(\cdot)^{H}$ denote  transpose and  conjugate transpose (hermitian operator), respectively. $\| \cdot\|^{2}$ denotes an $L_2$-norm square of a vector and
${\rm diag}\left(\cdot\right)$ is a diagonal matrix that has elements on the main diagonal.
%
%
$\mathbf{0}_{M \times N}$ and $\mathbf{1}_{M \times N}$  are $M \times N$ matrix whose entries are all ones or zeros, respectively. The subscript is omitted if there is no risk of confusion.
%
%
%
$\mathcal{CN}(m,\sigma^{2})$ denotes a circular symmetric complex Gaussian density with mean $m$ and variance $\sigma^{2}$.
$E_x[\cdot]$ denotes expectation over the random variable $x$.
${\rm Cov} \left({\mathbf{x},\mathbf{y}}\right)$ denotes $E\left[\mathbf{x}\mathbf{y}^{H}\right] - E[\mathbf{x}]E[\mathbf{y}^{H}]$.
For a hermitian matrix $\mathbf{A}$, $\mathbf{A}\succeq 0$ (or $\mathbf{A}\succ 0 $) means that $\mathbf{A}$ is semi-positive definite (or positive definite).
$Pr(A)$ means probability of the event $A$.
$f_{x_1, x_2, \cdots, x_n}(a_1, a_2, \cdots, a_n)$ denotes a joint probability density function (PDF) for the random variables $x_1, x_2, \cdots, x_n$.
%

%

\section{Problem Description}
\label{sec:problem}
In this section, we briefly review
 the IDD framework and then introduce the tree detection algorithms.

\subsection{Iterative Detection and Decoding (IDD)}
\label{subsec:IDD}

In a transmitter,
a rate
$R_{c}$ channel encoder is used to convert a sequence of independent identically distributed (i.i.d.) binary information bits $\{b_{i}\}$  to an encoded sequence $\{c_{i}\}$.
%
%
 The bit sequence $\{c_{i}\}$ is permuted using a random interleaver $\prod$ and  then
  mapped into a symbol vector using a $2^Q$-ary quadrature amplitude modulation (QAM) symbol alphabet.
We label the interleaved bits associated with the $k$th symbol $x_{k}$ by $\overline{c}_{k,1}, \cdots, \overline{c}_{k,Q}$.
Due to the interleaver, we assume that these interleaved bits are mutually uncorrelated.

In the system model (\ref{eq:start}),
%
$\mathbf{y}_{o}$ and $\mathbf{n}_{o}$
are the $L \times 1$ received signal and noise vectors, respectively.
Each entry of the $N \times 1$ symbol vector $\mathbf{x}$ is drawn from a finite alphabet
\begin{align} \label{eq:qamsig}
\mathcal{F} =& \Big\{x_{r} + j x_{i}  |\;\; x_{r}, x_{i} \in
\Big\{ \frac{-2^{Q/2}+1}{P},\frac{-2^{Q/2}+3}{P},
\cdots,   \frac{2^{Q/2}-3}{P},
\frac{2^{Q/2}-1}{P} \Big\} \Big\},
\end{align}
where $P$ is chosen to satisfy the normalization condition $E\left[|x_{k}|^{2}\right] = 1$. For example, $P = \sqrt{10}$ for $16$-QAM and $P = \sqrt{42}$ for $64$-QAM modulation, respectively.

\begin{figure*}[t]
\centerline{\psfig{figure=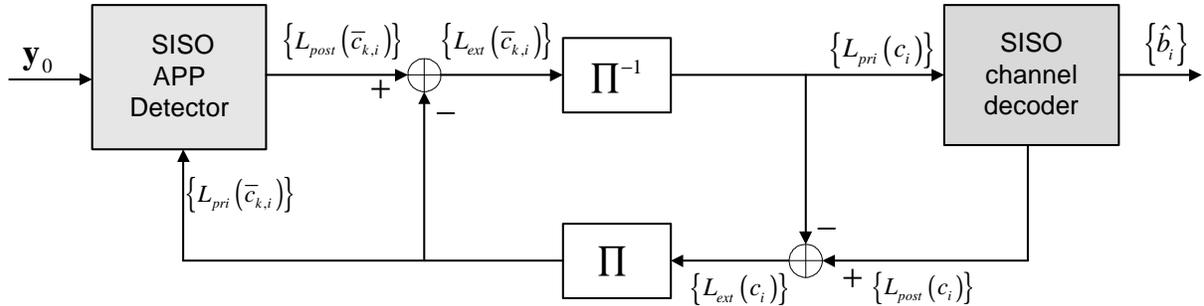,width=160mm}}
\caption{Block diagram of the IDD system. } \label{fig:idd}
\end{figure*}
Fig.~\ref{fig:idd} depicts the basic structure of an IDD system.
The receiver consists of two main blocks; the APP detector and the channel decoder.
The APP detector generates the \emph{a posteriori} log-likelihood ratio (LLR) of $\overline{c}_{k,i}$
using
 the observation $\mathbf{y}_{o}$ and \emph{a priori} information  delivered from the channel decoder.
The \emph{a posteriori} LLR  is defined as
\begin{equation} \label{eq:softmap}
L_{\rm post}\left(\overline{c}_{k,i}\right) = \ln \frac{Pr(\overline{c}_{k,i} = +1 | \mathbf{y}_{o})}{Pr(\overline{c}_{k,i}=-1 | \mathbf{y}_{o})},
\end{equation}
where we take $\overline{c}_{k,i} \in \{-1,1\}$ rather than $\{0,1\}$ by convention.
%
%
With the standard noise model $\mathbf{n}_{o} \sim \mathcal{CN}\left(0, \sigma_{n}^{2}\mathbf{I}\right)$, (\ref{eq:softmap}) can be rewritten \cite{hochwald}
%
\begin{equation} \label{eq:mapprom}
L_{\rm post}\left(\overline{c}_{k,i}\right) = \ln \frac{\sum_{\mathbf{x}\in X_{k,i}^{+1}}\exp\left(\psi\left(\mathbf{x}\right)\right)}{\sum_{\mathbf{x}\in X_{k,i}^{-1}}\exp\left(\psi\left(\mathbf{x}\right)\right)},
\end{equation}
where
\begin{align}
\psi(\mathbf{x}) =&  -\frac{1}{\sigma_{n}^{2}}\left\|\mathbf{y}_{o} - \mathbf{H}\mathbf{x} \right\|^{2} +
\sum_{i=1}^{N}\sum_{j=1}^{Q}\ln  Pr\left(\overline{c}_{i,j}\right), \nonumber \\
Pr\left(\overline{c}_{i,j}\right) =&  \frac{1}{2} \left(1 + \overline{c}_{i,j} \tanh\left(\frac{L_{\rm pri}(\overline{c}_{i,j})}{2}\right) \right).
\end{align}
The set $X_{k,i}^{+ 1}$ is the set of all configurations of the vector $\mathbf{x}$ satisfying $\overline{c}_{k,i} = + 1$ ($X_{k,i}^{-1}$ is defined similarly), and
$L_{\rm pri}(\overline{c}_{k,i})$ is the \emph{a priori}
LLR defined as $L_{\rm pri}(\overline{c}_{k,i}) = \ln Pr(\overline{c}_{k,i}=+1) - \ln Pr(\overline{c}_{k,i}=-1)$.
%
%
Once $L_{\rm post}(\overline{c}_{k,i})$ is computed, the extrinsic LLR is obtained from $L_{\rm ext}(\overline{c}_{k,i}) = L_{\rm post}(\overline{c}_{k,i}) - L_{\rm pri}(\overline{c}_{k,i})$. These extrinsic LLRs are de-interleaved and then
 delivered
to the channel decoder.
The channel decoder computes the extrinsic LLR for the coded bits $\{c_{i}\}$ and feeds them back to the APP detector.
%
%
These operations are repeated until a suitably chosen convergence criterion is achieved \cite{hochwald}.

\subsection{Soft-input Soft-output Tree Detection}
\label{subsec:siso}

The direct computation of the \emph{a posteriori} LLR in (\ref{eq:mapprom}) involves  marginalization over $2^{NQ}$ symbol candidates, which easily  becomes
infeasible for large systems employing high order modulations.
A tree detection algorithm addresses this problem by searching a small set of promising symbol candidates over which
\emph{a posteriori} LLRs are estimated.
Specifically, a small number of symbol vectors  with large $\psi(\mathbf{x})$, equivalently, small $-\sigma_{n}^{2}\psi(\mathbf{x})$, are sought.
In the sequel, we refer to  $d_{\rm APP}(\mathbf{x}) = -\sigma_{n}^{2}\psi(\mathbf{x})$ as a \emph{cost metric} for tree detection,
where $-\sigma_{n}^{2}$ is a scaling factor.
The goal of the tree detection algorithm is to find symbol vectors of small cost metric, and the best (minimum) among them corresponds to the
 \emph{maximum a posteriori} (MAP) solution (denoted by $\mathbf{x}_{\rm MAP}$).

The tree detection algorithm relies on a tree representation of the search space spanned by $\mathbf{x} = (x_1, \cdots, x_{N}) \in \mathcal{F}^{N}$.
%
%
Tree construction is performed from the root node as follows. First, representing the symbol realization for $x_{N}$, we  extend $2^Q$ branches from  the root (recall that we assume $2^Q$-ary QAM modulation).
For each such branch,  $2^Q$ child branches are extended for the possible realization of the next symbol $x_{N-1}$. These branch extensions are repeated until all branches corresponding to $x_{N}, \cdots, x_{1}$  are generated.
 This yields a tree of the depth $N$, where each ``complete" path  from the root to a leaf corresponds to a realization of $\mathbf{x}$.
In order to find the complete paths of small cost metric, the tree detection algorithm searches the tree using a systematic node visiting rule.
For notational simplicity, we henceforth denote a path associated with a set of symbols $x_{i}, \cdots, x_{j}, (i < j)$  by a column vector  $\mathbf{x}_{i}^{j} = \left[x_{i}, \cdots, x_{j}\right]^{T}$.
%
%
Also, we call a level of tree associated with the symbol $x_{i}$ ``the $i$th level" (e.g., the bottom level associated with $x_{1}$ is the first level).
For details on tree construction, see \cite{hassibi1}.

For a systematic search of symbol candidates, a \emph{path metric} is assigned to each path $\mathbf{x}_{i}^{N}$.
Towards this end, we perform a QR decomposition of $\mathbf{H}$ as
\begin{equation}
\mathbf{H} = \mathbf{Q} \begin{bmatrix} \mathbf{R} \\ \mathbf{0}\end{bmatrix} =  \left[\mathbf{Q}_{1}\;\; \mathbf{Q}_{2}\right] \begin{bmatrix} \mathbf{R} \\ \mathbf{0}\end{bmatrix},
\end{equation}
where  $\mathbf{R}$ has an $N\times N$ upper-triangular matrix whose diagonals are non-negative  and
$\mathbf{Q}$ is an $L\times N$ matrix satisfying $\mathbf{Q}^{H}\mathbf{Q} = \mathbf{I}$.
%
%
%
%
Using the invariance of the norm to unitary transformations, we can define the cost metric $d_{\rm APP}(\mathbf{x})$ as
%
\begin{align}
\label{eq:afterqr1}
d_{\rm APP}(\mathbf{x}) = -\sigma_{n}^{2}\psi(\mathbf{x})
 = & \left\|\mathbf{y} - \mathbf{R}\mathbf{x} \right\|^{2} - \sigma_{n}^{2}  \sum_{i=1}^{N}\sum_{j=1}^{Q} \ln Pr\left(\overline{c}_{i,j}\right) +C \\ \label{eq:afterqr4} =&
\sum_{i=1}^{N} b(\mathbf{x}_{i}^{N}) + C,
\end{align}
where
$b(\mathbf{x}_{i}^{N}) =\left|y'_{i} - \sum_{j=i}^{N}r_{i,j}x_{j} \right|^{2} - \sigma_{n}^{2}  \sum_{i=1}^{Q} \ln Pr\left(\overline{c}_{k,i}\right)$,
and $\mathbf{y} = \left[y_{1}, \cdots, y_{N} \right]^{T} = \mathbf{Q}_{1}^{H}\mathbf{y}_{o}$ and $C = \left\|\mathbf{Q}_{2}^{H} \mathbf{y}_{o} \right\|^{2}$.
%
%
The path metric associated with the path $\mathbf{x}_{k}^{N}$ can be defined as a partial sum in the cost metric \cite{hassibi1}
\begin{align} \label{eq:lowerb}
\gamma^{(c)}\left(\mathbf{x}_{k}^{N}\right) = \sum_{i=k}^{N} b(\mathbf{x}_{i}^{N}).
\end{align}
%
%
Whenever a new node is visited, the term $b(\mathbf{x}_{i}^{N})$, referred to as a branch metric, is added to the path metric of the parent node.
Since  the branch metric is non-negative for all $i$,  the path metric $\gamma^{(c)}\left(\mathbf{x}_{k}^{N}\right)$ becomes a lower bound of the cost metric $d_{\rm APP}(\mathbf{x})$.
Using $\gamma^{(c)}\left(\mathbf{x}_{k}^{N}\right)$, the tree detection algorithm compares the reliability of distinct paths and chooses the surviving paths.
Since the path metric is determined by the visited path, we henceforth denote $\gamma^{(c)}\left(\mathbf{x}_{k}^{N}\right)$ as a \emph{causal path metric}.

According to a predefined node visiting rule \cite{arul}, the tree detection algorithm attempts to find the \emph{complete paths} associated with smallest cost metric.
Denoting the set of the corresponding symbol candidates as $\mathcal{L}$, an approximate APP can be expressed as
 \begin{equation} \label{eq:mapprom_dasi}
L_{\rm post}\left(\overline{c}_{k,i}\right) \approx \ln \frac{\sum_{\mathbf{x}\in \mathcal{L} \cap X_{k,i}^{+1}}\exp\left(\psi\left(\mathbf{x}\right)\right)}{\sum_{\mathbf{x}\in \mathcal{L} \cap X_{k,i}^{-1}}\exp\left(\psi\left(\mathbf{x}\right)\right)}.
\end{equation}
Further simplification can be achieved using max-log approximation \cite{logmapsova,proakis}
\begin{equation} \label{eq:logmapprom_dasi}
L_{\rm post}(\overline{c}_{k,i}) \approx \left(\max_{\mathbf{x}\in \mathcal{L} \cap X_{k,i}^{+1}} \psi\left(\mathbf{x}\right) -  \max_{\mathbf{x}\in \mathcal{L} \cap X_{k,i}^{-1}} \psi\left(\mathbf{x}\right)\right).
\end{equation}
Since $\mathcal{L}$ does not span whole symbol space, either $\mathcal{L} \cap X_{k,i}^{+1}$ or $\mathcal{L} \cap X_{k,i}^{-1}$ might be empty for some values of $k$.
If this case happens, the magnitude of $L_{\rm post}(\overline{c}_{k,i})$ is set to infinity, causing a bias in LLR values.
One way to cope with this event is to clip the magnitude of $L_{\rm post}(\overline{c}_{k,i})$
to a constant value (e.g., $\pm 8$) \cite{hochwald}.

%

%
%

%

\section{Improved Soft-input Soft-output $M$-algorithm (ISS-MA)}
\label{sec:iss-ma}
In this section, we present the ISS-MA, which improves candidate selection process of the soft-input soft-output tree detection algorithms. We first describe a genie-aided path metric that motivates our work and then  introduce the new path metric that accounts for the information on
 unvisited paths.
 We also discuss an efficient way to compute the new path metric.
%

\subsection{Motivation}
\label{subsec:motivation}

We begin our discussion with the following path metric:
\begin{definition}
A \emph{genie-aided path metric} $\gamma^{(g)}\left(\mathbf{x}_{k}^{N}\right)$ is defined as
\begin{equation} \label{eq:genie}
\gamma^{(g)}\left(\mathbf{x}_{k}^{N}\right)  = \gamma^{(c)}\left(\mathbf{x}_{k}^{N}\right) + \underbrace{\min_{\mathbf{x}_{1}^{k-1}}\left(\sum_{i=1}^{k-1} b\left(\mathbf{x}_{i}^{N}\right)\right) }_{\mbox{bias term}}.
\end{equation}
\end{definition}
The genie-aided path metric is obtained by minimizing the sum of $b(\mathbf{x}_{i}^{N}) (1\leq  i \leq k-1)$  over all combinations of undecided symbols $\mathbf{x}_{1}^{k-1}$.
 This minimal term, which can be considered as a bias term, is added into the causal path metric.
%
%
%
The genie-aided path metric can be used in the $M$-algorithm so that the best $M$ candidates with the smallest genie-aided path metric
are selected in each tree level.
It is easy to show that the $M$-algorithm employing the genie-aided path metric finds the closest (best) path
with probability one  (even for $M=1$).
This can be readily shown since the genie-aided path metric provides the smallest cost metric among all tail paths.
Note that a similar path metric appeared in \cite{stojnic}.
%
\begin{theorem}
\label{theorem:map}
Given the actual transmitted symbol vector $\widetilde{\mathbf{x}}_{k}^{N}$ (i.e., $\mathbf{x}_{k}^{N} = \widetilde{\mathbf{x}}_{k}^{N}$), the bias term of the genie-aided path metric is
\begin{equation} \label{eq:geniegenie}
\min_{\mathbf{x}_{1}^{k-1}}\left(\sum_{i=1}^{k-1} b\left(\mathbf{x}_{i}^{N}\right)\right) = \sum_{i=1}^{k-1} b\left(\mathbf{x}_{i}^{N}\right) \bigg|_{\mathbf{x}_{1}^{k-1} = \check{\mathbf{x}}_{1}^{k-1}},
\end{equation}
where the minimizer $\check{\mathbf{x}}_{1}^{k-1}$ is the MAP estimate of $\mathbf{x}_{1}^{k-1}$, i.e.,
\begin{equation} \label{eq:maxmax}
\check{\mathbf{x}}_{1}^{k-1} = \arg \max_{\mathbf{x}_{1}^{k-1}} \ln Pr\left(\mathbf{x}_{1}^{k-1} \Big| \mathbf{y}, \mathbf{x}_{k}^{N} = \widetilde{\mathbf{x}}_{k}^{N}\right).
\end{equation}
\end{theorem}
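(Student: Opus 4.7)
The plan is to identify the partial sum $\sum_{i=1}^{k-1} b(\mathbf{x}_i^N)$, evaluated with $\mathbf{x}_k^N = \widetilde{\mathbf{x}}_k^N$, as (a negative scalar multiple of) the log of the conditional posterior $\Pr(\mathbf{x}_1^{k-1}\mid\mathbf{y},\mathbf{x}_k^N=\widetilde{\mathbf{x}}_k^N)$ up to an additive constant. Once this identification is made, minimization over $\mathbf{x}_1^{k-1}$ is precisely the MAP problem stated in \eqref{eq:maxmax}.

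Concretely, I would proceed in the following order. First, I would split the observation into its ``past'' and ``tail'' blocks using the upper-triangular structure of $\mathbf{R}$: for $i\ge k$ we have $y_i' = \sum_{j=i}^N r_{i,j} x_j + n_i$, which depends only on $\mathbf{x}_k^N$, while for $i<k$ the row couples $\mathbf{x}_1^{k-1}$ to $\mathbf{x}_k^N$. Since $\mathbf{n} = \mathbf{Q}_1^H \mathbf{n}_o \sim \mathcal{CN}(0,\sigma_n^2\mathbf{I})$ has i.i.d.\ entries, the likelihood factorizes as
\begin{equation}
\Pr(\mathbf{y}\mid\mathbf{x}) = \Pr(y_1',\dots,y_{k-1}'\mid \mathbf{x}_1^{k-1},\mathbf{x}_k^N)\,\Pr(y_k',\dots,y_N'\mid \mathbf{x}_k^N),
\end{equation}
with the second factor independent of $\mathbf{x}_1^{k-1}$.

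Next, I would apply Bayes' rule under the conditioning $\mathbf{x}_k^N = \widetilde{\mathbf{x}}_k^N$:
\begin{equation}
\Pr(\mathbf{x}_1^{k-1}\mid\mathbf{y},\mathbf{x}_k^N=\widetilde{\mathbf{x}}_k^N) \propto \Pr(y_1',\dots,y_{k-1}'\mid \mathbf{x}_1^{k-1},\mathbf{x}_k^N=\widetilde{\mathbf{x}}_k^N)\,\Pr(\mathbf{x}_1^{k-1}),
\end{equation}
where the prior $\Pr(\mathbf{x}_1^{k-1})$ factors as $\prod_{i=1}^{k-1}\prod_{l=1}^Q \Pr(\overline{c}_{i,l})$ by the interleaver-induced independence assumption stated in Section~\ref{subsec:IDD}. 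Taking logarithms and plugging in the complex Gaussian density, I obtain
\begin{equation}
-\sigma_n^2 \ln \Pr(\mathbf{x}_1^{k-1}\mid\mathbf{y},\mathbf{x}_k^N=\widetilde{\mathbf{x}}_k^N) = \sum_{i=1}^{k-1} \left|y_i' - \sum_{j=i}^N r_{i,j} x_j\right|^2 - \sigma_n^2 \sum_{i=1}^{k-1}\sum_{l=1}^Q \ln \Pr(\overline{c}_{i,l}) + C',
\end{equation}
with $C'$ independent of $\mathbf{x}_1^{k-1}$. The right-hand side is exactly $\sum_{i=1}^{k-1} b(\mathbf{x}_i^N)+C'$ when $\mathbf{x}_k^N=\widetilde{\mathbf{x}}_k^N$.

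Finally, since $-\sigma_n^2<0$, the $\mathbf{x}_1^{k-1}$ minimizing the partial sum coincides with the $\mathbf{x}_1^{k-1}$ maximizing the conditional log-posterior, which is $\check{\mathbf{x}}_1^{k-1}$ in \eqref{eq:maxmax}, yielding \eqref{eq:geniegenie}. The main obstacle I anticipate is purely bookkeeping: one has to make sure the additive constants absorbed in $C'$ (coming from the Gaussian normalizer, the prior normalizer, and the $\mathbf{x}_k^N$-dependent terms that drop out in the conditional posterior) really are independent of $\mathbf{x}_1^{k-1}$, and that the prior factorization step correctly invokes the uncorrelated-interleaved-bits assumption rather than sneaking in a joint distribution on $\mathbf{x}$.
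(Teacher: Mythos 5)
Your proposal is correct and follows essentially the same route as the paper's proof in Appendix A: both write the conditional posterior via Bayes' rule, expand the Gaussian likelihood, use the upper-triangular structure of $\mathbf{R}$ to isolate the block depending on $\mathbf{x}_{1}^{k-1}$ (the tail block being constant under the conditioning), and identify the resulting quadratic-plus-prior expression with $\sum_{i=1}^{k-1} b(\mathbf{x}_{i}^{N})$ so that maximizing the log-posterior is minimizing the partial branch-metric sum. The bookkeeping concerns you flag (constants independent of $\mathbf{x}_{1}^{k-1}$, prior factorization from the interleaver assumption) are handled the same way in the paper and pose no obstacle.
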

\begin{proof}
See Appendix \ref{app:map}.
\end{proof}
\emph{Theorem} \ref{theorem:map} implies that the bias term of the genie-aided path metric is obtained by computing $\sum_{i=1}^{k-1} b\left(\mathbf{x}_{i}^{N}\right)$ using the
MAP estimate of $\mathbf{x}_{1}^{k-1}$.
This MAP estimate is derived under the condition that the path associated with the actual transmitted symbols, $\widetilde{\mathbf{x}}_{k}^{N}$ is given.
%
%
%
Though the genie-aided path metric offers a substantial performance gain,
it is impractical to incorporate it into tree search due to the high complexity associated with the MAP estimation.

\subsection{Derivation of Linear Estimate-Based Look-Ahead (LE-LA) Path Metric}
\label{subsec:derivation}
In order to alleviate the complexity associated with MAP detection of $\mathbf{x}_{1}^{k-1}$ in the genie-aided path metric, we relax the finite alphabet constraint of  $\mathbf{x}_{1}^{k-1}$ and then replace the MAP estimate by  the linear MMSE estimate $\hat{\mathbf{x}}_{1}^{k-1}$.
Note that when $\hat{\mathbf{x}}_{1}^{k-1}$ is assumed to be Gaussian, the MAP estimate is identical to the linear MMSE
estimate \cite{poor_book}.
For  a particular path visited $\mathbf{x}_{k}^{N}$, we first define the LE-LA path metric
\begin{definition}
The \emph{linear estimate-based look-ahead path metric}, denoted by $\gamma^{(l)}\left(\mathbf{x}_{k}^{N}\right)$,  is defined as
\begin{equation} \label{eq:npm}
\gamma^{(l)}\left(\mathbf{x}_{k}^{N}\right)  \triangleq \gamma^{(c)}\left(\mathbf{x}_{k}^{N}\right) + \underbrace{\sum_{i=1}^{k-1} b\left(\mathbf{x}_{i}^{N}\right) \bigg|_{\mathbf{x}_{1}^{k-1} = \hat{\mathbf{x}}_{1}^{k-1}} }_{\mbox{bias term, } \gamma^{(b)}\left(\mathbf{x}_{k}^{N}\right) },
\end{equation}
where $\hat{\mathbf{x}}_{1}^{k-1}$ is the linear MMSE estimate of $\mathbf{x}_{1}^{k-1}$.
\end{definition}
%
Note that $\hat{\mathbf{x}}_{1}^{k-1}$ is obtained under the condition that $\mathbf{x}_{k}^{N} = \widetilde{\mathbf{x}}_{k}^{N}$. In the sequel, we denote this bias
 term as $\gamma^{(b)}\left(\mathbf{x}_{k}^{N}\right)$.

To derive the linear MMSE estimate $\hat{\mathbf{x}}_{1}^{k-1}$, we partition the vectors $\mathbf{y}$ and $\mathbf{n}(\triangleq \mathbf{Q}_{1}^{H}\mathbf{n}_{o})$ to $(k-1) \times 1$ and $(N-k+1) \times 1$ vectors, i.e.,
\begin{align}
\mathbf{y} =& \begin{bmatrix} \mathbf{y}_{1}^{k-1} \\ \mathbf{y}_{k}^{N} \end{bmatrix} \label{eq:partition22}
= \begin{bmatrix}\mathbf{R}_{11,k} & \mathbf{R}_{12,k} \\ \mathbf{0} & \mathbf{R}_{22,k} \end{bmatrix}\begin{bmatrix} \mathbf{x}_{1}^{k-1} \\ \mathbf{x}_{k}^{N} \end{bmatrix} + \begin{bmatrix} \mathbf{n}_{1}^{k-1} \\ \mathbf{n}_{k}^{N} \end{bmatrix},
\end{align}
where $\mathbf{R}_{11,k}$, $\mathbf{R}_{12,k}$, and $\mathbf{R}_{22,k}$ are the adequately partitioned sub-matrices of $\mathbf{R}$.
%
Using (\ref{eq:partition22}), $\gamma^{(l)}\left(\mathbf{x}_{k}^{N}\right)$ can be expressed as
\begin{align}
\gamma^{(l)}\left(\mathbf{x}_{k}^{N}\right) &= \gamma^{(c)}\left(\mathbf{x}_{k}^{N}\right)+ \gamma^{(b)}\left(\mathbf{x}_{k}^{N}\right)
\end{align}
where
\begin{align}
 \label{eq:biastm1}
\gamma^{(c)}\left(\mathbf{x}_{k}^{N}\right) &= \left\|\mathbf{y}_{k}^{N} - \mathbf{R}_{22,k}\mathbf{x}_{k}^{N} \right\|^{2} + \xi\left(\mathbf{x}_{k}^{N}\right)
\\
 \gamma^{(b)}\left(\mathbf{x}_{k}^{N}\right) &=  \left\|
 \mathbf{y}_{1}^{k-1}  -  \mathbf{R}_{11,k}\hat{\mathbf{x}}_{1}^{k-1}  - \mathbf{R}_{12,k}\mathbf{x}_{k}^{N} \right\|^{2} \label{eq:biastm2}
\end{align}
and $\xi\left(\mathbf{x}_{k}^{N}\right) = - \sigma_{n}^{2} \sum_{i=k}^{N}   \sum_{j=1}^{Q} \ln  Pr\left(\overline{c}_{i,j}\right)$.
%
Note that the term generated by \emph{a priori} information $\xi\left(\mathbf{x}_{k}^{N}\right)$ considers only $\mathbf{x}_{k}^{N}$  since  the symbols $\mathbf{x}_{1}^{k-1}$ are undecided.
%
Note also that the linear MMSE estimate of the non-causal symbols $\mathbf{x}_{1}^{k-1}$ is given by \cite{poor_book}
\begin{align}
\hat{\mathbf{x}}_{1}^{k-1} &=
 \mathbf{F}_{k}
 \left(\mathbf{y}_{1}^{k-1} - E\left[\mathbf{y}_{1}^{k-1}\Big|\mathbf{x}_{k}^{N} = \widetilde{\mathbf{x}}_{k}^{N}\right] \right) + E\left[\mathbf{x}_{1}^{k-1}\Big|\mathbf{x}_{k}^{N} = \widetilde{\mathbf{x}}_{k}^{N}\right] \\ \label{eq:softmmse2}
 & = \mathbf{F}_{k}
 \left(\mathbf{y}_{1}^{k-1} -  \mathbf{R}_{11,k}\overline{\mathbf{x}}_{1}^{k-1} - \mathbf{R}_{12,k}\mathbf{x}_{k}^{N} \right) + \overline{\mathbf{x}}_{1}^{k-1},
  \end{align}
where $\overline{\mathbf{x}}_{1}^{k-1} = E[\mathbf{x}_{1}^{k-1}]$ and $\mathbf{F}_{k} = {\rm Cov}(\mathbf{x}_{1}^{k-1},\mathbf{y}_{1}^{k-1}|\mathbf{x}_{k}^{N} =  \widetilde{\mathbf{x}}_{k}^{N})$
$ {\rm Cov}^{-1}(\mathbf{y}_{1}^{k-1}|\mathbf{x}_{k}^{N} = \widetilde{\mathbf{x}}_{k}^{N})$.  We can obtain $\overline{\mathbf{x}}_{1}^{k-1}$ and $\mathbf{F}_{k}$  from \emph{a priori} LLRs as \cite{haykin}
\begin{equation} \label{eq:mean}
\overline{\mathbf{x}}_{1}^{k-1} =   \begin{bmatrix}  \sum_{\theta \in \Theta}\theta \prod_{j=1}^{Q} \frac{1}{2} \left(1 + \overline{c}_{1,j} \tanh\left(\frac{L_{\rm pri}(\overline{c}_{1,j})}{2}\right) \right) \\ \vdots \\  \sum_{\theta \in \Theta}\theta \prod_{j=1}^{Q} \frac{1}{2} \left(1 + \overline{c}_{k-1,j} \tanh\left(\frac{L_{\rm pri}(\overline{c}_{k-1,j})}{2}\right) \right) \end{bmatrix}
\end{equation}
\begin{equation} \label{eq:var}
\mathbf{F}_{k} = \mathbf{\Lambda}_{k}(\mathbf{R}_{11,k})^{H}\left(\left(\mathbf{R}_{11,k}\right)\mathbf{\Lambda}_{k}(\mathbf{R}_{11,k})^{H} + \sigma_{n}^{2}\mathbf{I}\right)^{-1},
\end{equation}
where $\mathbf{\Lambda}_{k} = {\rm diag}\left(\lambda_{1}, \cdots, \lambda_{k-1}\right)$ and $\lambda_{i} = \sum_{\theta \in \Theta}|\theta - \overline{x}_{i}|^{2}\prod_{q=1}^{Q} \frac{1}{2}(1 + \overline{c}_{i,q} \tanh(\frac{L_{\rm pri}(\bar{c}_{i,q})}{2}) )$.
%
The set $\Theta$ includes all possible constellation points. In the first iteration of the IDD where \emph{a priori} LLRs, $L_{\rm pri}(\bar{c}_{i,q})$ are unavailable, $\mathbf{\Lambda}_{k} = \mathbf{I}$
and $\overline{\mathbf{x}}_{1}^{k-1} = \mathbf{0}$.

Using (\ref{eq:biastm2}) and (\ref{eq:softmmse2}),   $\gamma^{(b)}\left(\mathbf{x}_{k}^{N}\right)$ can be rewritten
%
\begin{align}
\gamma^{(b)}\left(\mathbf{x}_{k}^{N}\right) &=\left\| \left(\mathbf{I} - \mathbf{R}_{11,k}\mathbf{F}_{k}\right)\left(
 \mathbf{y}_{1}^{k-1} -  \mathbf{R}_{11,k}\overline{\mathbf{x}}_{1}^{k-1} - \mathbf{R}_{12,k}\mathbf{x}_{k}^{N}\right) \right\|^{2} \\ \label{eq:bb2}  &=   \left\|\mathbf{Z}_{k} \left(
 \mathbf{y}_{1}^{k-1} -  \mathbf{R}_{11,k}\overline{\mathbf{x}}_{1}^{k-1} -  \mathbf{R}_{12,k}\mathbf{x}_{k}^{N}\right) \right\|^{2} ,
\end{align}
where
\begin{align}
\mathbf{Z}_{k} &= \mathbf{I} - \mathbf{R}_{11,k}\mathbf{F}_{k} \\ \label{eq:z2}
&= \sigma_{n}^{2}\left(\mathbf{R}_{11,k}\mathbf{\Lambda}_{k}(\mathbf{R}_{11,k})^{H} + \sigma_{n}^{2}\mathbf{I}\right)^{-1}.
\end{align}
%
%
Further, denoting $\mathbf{q}_{k}=\mathbf{Z}_{k} (
 \mathbf{y}_{1}^{k-1} -  \mathbf{R}_{11,k}\overline{\mathbf{x}}_{1}^{k-1})$ and $\mathbf{P}_{k} = \mathbf{Z}_{k} \mathbf{R}_{12,k}$,  $\gamma^{(l)}\left(\mathbf{x}_{k}^{N}\right)$ can be simply expressed
\begin{equation} \label{eq:finaldef}
\gamma^{(l)}\left(\mathbf{x}_{k}^{N}\right)  = \gamma^{(c)}\left(\mathbf{x}_{k}^{N}\right) +\underbrace{\left\|\mathbf{q}_{k} - \mathbf{P}_{k} \mathbf{x}_{k}^{N} \right\|^{2}}_{\mbox{bias term}}.
\end{equation}
Note that the bias term ${\|\mathbf{q}_{k} - \mathbf{P}_{k} \mathbf{x}_{k}^{N} \|^{2}}$ of the LE-LA path metric can be computed with only linear operations.
Note also that \emph{a priori} information obtained from the channel decoder is reflected through $\overline{\mathbf{x}}_{1}^{k-1}$
 and $\mathbf{\Lambda}_{k}$ in the bias term.

\subsection{Efficient Computation of Path Metric}
\label{subsec:efficient}
In this subsection, we discuss how the LE-LA path metric can be computed efficiently.
%
Recalling that the bias term is expressed as $\|\mathbf{Z}_{k} (
 \mathbf{y}_{1}^{k-1} -  \mathbf{R}_{11,k}\overline{\mathbf{x}}_{1}^{k-1} -  \mathbf{R}_{12,k}\mathbf{x}_{k}^{N}) \|^{2}$, computation of
the path metric is divided into two steps; 1) computation of $\mathbf{Z}_{k}$ for all $k$ prior to the tree search and 2) recursive update
of the path metric for each branch extension during the search.

First, using a matrix inversion formula for block matrices \cite[\emph{Appendix 1.1.3}]{kai}, the operators $\mathbf{Z}_{k} (k = 1, \cdots, N)$ in (\ref{eq:z2}) can be computed recursively.
%
%
%
%
Denoting
\begin{equation}
\mathbf{R}_{11,k+1} = \begin{bmatrix}\mathbf{R}_{11,k} & {\mathbf{r}}_{k+1}
\\ \mathbf{0} & r_{k+1,k+1}\end{bmatrix}, \; \mathbf{\Lambda}_{k+1} = \begin{bmatrix}\mathbf{\Lambda}_{k} & \mathbf{0}
\\ \mathbf{0} & \lambda_{k+1}\end{bmatrix},
\end{equation}
and  ${\mathbf{r}}_{k+1} = \left[r_{1,k+1}, \cdots, r_{k,k+1} \right]^{T}$, then $\mathbf{Z}_{k+1}$ is expressed as a function of $\mathbf{Z}_{k}$ as
\begin{align} \label{eq:recursive}
\mathbf{Z}_{k+1} = \begin{bmatrix}\mathbf{Z}_{k} - K\lambda_{k+1}\mathbf{Z}_{k}{\mathbf{r}}_{k+1}
{\mathbf{r}}_{k+1}^{H}\mathbf{Z}_{k} & -K\lambda_{k+1}r_{k+1,k+1}\mathbf{Z}_{k}{\mathbf{r}}_{k+1} \\ -K\lambda_{k+1}r_{k+1,k+1}{\mathbf{r}}_{k+1}^{H}\mathbf{Z}_{k} & K\left(\lambda_{k+1}{\mathbf{r}}_{k+1}^{H}\mathbf{Z}_{k}{\mathbf{r}}_{k+1} + \sigma_{n}^{2}\right) \end{bmatrix},
\end{align}
%
where
\begin{equation}
K = \frac{1}{\lambda_{k+1}\left({\mathbf{r}}_{k+1}^{H}\mathbf{Z}_{k}
{\mathbf{r}}_{k+1} + r_{k+1,k+1}^{2}\right) + \sigma_{n}^{2}}.
\end{equation}
In particular, $\mathbf{Z}_{2} = \frac{\sigma_{n}^{2}}{\lambda_{1}r_{1,1}^{2}+\sigma_{n}^{2}}$.
See Appendix \ref{app:recursive} for the derivation of (\ref{eq:recursive}).
If the \emph{a priori} LLRs are all zero, $\mathbf{Z}_{k}$ does not need to be computed for every symbol as long as the channel remains constant.
If the \emph{a priori} LLRs are non-zero, these steps are performed for each symbol.
However, the required computations can be further reduced by replacing the instantaneous covariance matrix
$\mathbf{\Lambda}_{k}$ by its
time-average over a coherent time \cite{singer}.



%

Next, the LE-LA path metric can be recursively updated for each tree extension.
At the root node, a vector is defined such that $\mathbf{a}_{N+1}  = \mathbf{y} - \mathbf{R} \overline{\mathbf{x}}_{1}^{N}$.
The vector $\mathbf{a}_{k}$  is updated from that of its parent node  as
\begin{align}
\begin{bmatrix}
\mathbf{a}_{k}
\\
v_{k}
\end{bmatrix} = \mathbf{a}_{k+1} - \begin{bmatrix} r_{1,k} & \cdots & r_{k,k} \end{bmatrix}^{T} (x_{k} - \overline{x}_{k}),
\end{align}
where $v_{k}$ is a scaler variable.
Using the vector for each path $\mathbf{x}_{k}^{N}$, the LE-LA path metric can be obtained as
\begin{align}
\gamma^{(l)}\left(\mathbf{x}_{k}^{N}\right) &= \gamma^{(c)}\left(\mathbf{x}_{k}^{N}\right) + \gamma^{(b)}\left(\mathbf{x}_{k}^{N}\right)\\
\gamma^{(c)}\left(\mathbf{x}_{k}^{N}\right) &= \gamma^{(c)}\left(\mathbf{x}_{k+1}^{N}\right) + \left|v_{k}\right|^{2} + \xi\left({x}_{k}\right) \\ \label{eq:dum3}
\gamma^{(b)}\left(\mathbf{x}_{k}^{N}\right) &= \left\|\mathbf{Z}_{k} \cdot \mathbf{a}_{k} \right\|^{2},
\end{align}
where $\gamma^{(c)}\left(\mathbf{x}_{N+1}^{N}\right) = 0$.
Noting that the dimension of the matrix $\mathbf{Z}_{k}$ is $(k-1) \times (k-1)$, the number of complex multiplications for the bias term computation is proportional to  $(k-1)^2$.
%
%
In order to reduce the complexity, we can look ahead only $N_{l} (< k-1)$ symbols instead of all non-causal symbols.
%
%
Towards this goal, we set $\alpha = \max(0,k-N_{l})$ and repartition  the system as
\begin{align} \label{eq:newdef}
 \begin{bmatrix} \mathbf{y}_{\alpha}^{k-1} \\ \mathbf{y}_{k}^{N} \end{bmatrix} = \begin{bmatrix}\overline{\mathbf{R}}_{11,k} & \overline{\mathbf{R}}_{12,k} \\ \mathbf{0} & \overline{\mathbf{R}}_{22,k} \end{bmatrix}\begin{bmatrix} \mathbf{x}_{\alpha}^{k-1} \\ \mathbf{x}_{k}^{N} \end{bmatrix} + \begin{bmatrix} \mathbf{n}_{\alpha}^{k-1} \\ \mathbf{n}_{k}^{N} \end{bmatrix},
\end{align}
where $\overline{\mathbf{R}}_{11,k}$ and $\overline{\mathbf{R}}_{12,k}$ are the redefined sub-matrices of (\ref{eq:partition22}), respectively.
In this case, the bias term defined in Section \ref{subsec:derivation} needs to be modified based on this partitioning. In doing so,  the dimension
of $\overline{\mathbf{R}}_{11,k}$ and $\mathbf{Z}_{k}$ is reduced from $(k-1)\times(k-1)$ to $N_l \times N_l$.
%
The recursive computation of $\mathbf{Z}_{k}$ employing the new partitioning can be derived without matrix inversion (see \cite[{\it Section III. A}]{singer1}).
In addition, in (\ref{eq:dum3}), we only need to multiply $\mathbf{Z}_{k}$ with the last $\alpha$ elements of $\mathbf{a}_{k}$.
%
%
Overall, by using only $N_l$ non-causal symbols for the bias term,  the number of operations for the bias computation can be
reduced from $M\sum_{k=1}^N (k-1)^2(={M}/{6}\cdot (2N^3-3N^2+N))$ to $M N\cdot N_l^2$.

\subsection{Application to APP Detection}
\label{subsec:application}
\begin{table}[t]
\begin{center}
\caption{Summary of ISS-MA}
\begin{tabular}{p{15cm}}
  \hline
  \hline \\[-0.0cm]
  \textbf{Output}: $\left\{L_{\rm post}(\overline{c}_{k,i})\right\}_{k=[1:N], i=[1:Q]}$ \\ \textbf{Input}: $\mathbf{y}$, $\mathbf{H}$, $\left\{L_{\rm pri}(\overline{c}_{k,i})\right\}_{k=[1:N], i=[1:Q]}$, $N_{l}$ and $J$ \\[0.3cm]
  \hline \\[-0.0cm]
  \textbf{STEP~1}: (Preprocessing)  Order $\mathbf{x}$ and $\mathbf{H}$ according to V-BLAST ordering \cite{vblast} or B-Chase preprocessing \cite{milliner}. Then, compute $\mathbf{Z}_{k}$ for all $k$.\\
  \textbf{STEP~2}:  (Initialization) Initialize $i=N+1$ and start the tree search from the root node. \\
  \textbf{STEP~3}: (Loop) Extend $2^Q$ branches for each of  $M$ paths  that have survived at the $(i+1)$th level.
  This generates $2^QM$ paths   at the $i$th level.
   \\
   \textbf{STEP~4}: If $i > 1$, choose the  $M$ best paths with the smallest $\gamma^{(l)}\left(\mathbf{x}_{i}^{N}\right)$
    and go to STEP 3 with $i = i-1$. Otherwise, store all $2^QM$ survival candidates into the list $\mathcal{L}$ and go to STEP 5. \\
    \textbf{STEP~5}: (List extension \& APP calculation) For each value of $k$ and  $i$, compute $\left\{L_{\rm post}(\overline{c}_{k,i})\right\}$ based on $\mathcal{L}$. If the value of $\overline{c}_{k,i}$
   for all elements of $\mathcal{L}$ is either +1 or -1, the value of $\overline{c}_{k,i}$ of the best $J$ candidates (associated with the minimum cost metric)  is flipped and these counter-hypothesis  candidates are added to $\mathcal{L}$
   to generate the extended list $\mathcal{L}_{k}^{\rm ext}$. The APP is calculated over the extended list based on (\ref{eq:mapprom_dasi}). \\[0.3cm]

   \hline
  \hline
  \end{tabular}
 \label{tb:malg}
  \end{center}
\end{table}
In this section, we introduce the soft-input soft-output tree detection algorithm employing the LE-LA path metric.
%
To reduce errors in early detection stages, symbol detection ordering is performed first.
  The V-BLAST ordering \cite{vblast} or B-Chase preprocessing \cite{milliner} can be adopted.
  Note that the B-Chase preprocessing is preferred when $M$ is larger than the constellation size $2^Q$.
%
In each level of the tree, $\gamma^{(l)}\left(\mathbf{x}_{k}^{N}\right)$ of $2^{Q}M$ survival paths are compared and then the $M$ best paths are selected.
Starting from the root node, this candidate selection procedure continues to the bottom level, eventually producing $2^{Q}M$  complete paths.
The symbol vectors corresponding to these complete paths generate a candidate list $\mathcal{L}$, over which the extrinsic LLR for each bit  is calculated.
In the event that a particular bit in each of the candidates takes the same value (all one or zero), the magnitude of the generated LLR might become unduly large, limiting the error-correction capability of the channel decoder \cite{clip}.
In order to prevent this situation, whenever this occurs for the $k$th bit of the candidate list, the $k$th bits of the best $J$  candidates ($J\leq 2^{Q}M$) are flipped and added  into the
candidate list $\mathcal{L}$, generating an extended list $\mathcal{L}_{k}^{\rm ext}$.
As a result, the size of $\mathcal{L}_{k}^{\rm ext}$ becomes $2^{Q}M+J$.
 A more desirable flipping method would be to flip the corresponding bit of all candidates and then select $J$ of them which would have the lowest cost function. Since this method increases the complexity considerably, we employ the alternative that flips the best $J$ candidates. Though the current approach would produce a slightly degraded counter-hypothesis set, we hope that it is highly likely to be, or to at least have a large overlap with, the aforementioned best counter-hypothesis set.
%
%
Using the list $\mathcal{L}_{k}^{\rm ext}$ together with the max-log approximation, the APP becomes
 \begin{equation} \label{eq:mapprom_dasi}
L_{\rm post}\left(\overline{c}_{k,i}\right) \approx  \max_{\mathbf{x}\in \mathcal{L}_{k}^{\rm ext} \cap X_{k,i}^{+1}}\psi\left(\mathbf{x}\right) - {\max_{\mathbf{x}\in \mathcal{L}_{k}^{\rm ext} \cap X_{k,i}^{-1}}\psi\left(\mathbf{x}\right)}.
\end{equation}
A summary of the ISS-MA is provided in Table \ref{tb:malg}.

\section{Performance Analysis}
\label{sec:performance}

  We discussed in the previous section that the transmitted symbols are always found with $M=1$ if the genie-aided path metric is used.
  Relaxation of the finite alphabet constraint and Gaussian approximation are made for undecided symbols to derive the LE-LA path metric.
  In this section, we analyze the performance of the proposed $M$-algorithm employing the LE-LA path metric.
  %
%
As a measure for performance, we consider the probability of a CPL event, i.e., the probability that the tree search rejects a path associated with the transmitted symbols.
%
%
In order to make the analysis tractable, we focus on  the case when $M=1$.
Although our analysis focuses only on the case $M=1$, it is clear that lower CPL rate for $M=1$ implies a greater likelihood of a correct symbol being selected for $M>1$ as well.
%
The performance analysis for  $M>1$ is presented via computer simulations in Section \ref{subsec:result}.

Given the channel matrix $\mathbf{R}$ and the \emph{a priori} LLRs,  the probability of CPL can be expressed as
\begin{align}
P_{\rm CPL} &= 1 - \Pr \left(\widetilde{\mathbf{x}} \in \mathcal{L} | \widetilde{\mathbf{x}} \mbox{ is sent}\right) \\
& \label{eq:pcl2} = 1 - \prod_{k=1}^{N} \overline{\Pr} \left( \left. \widetilde{\mathbf{x}}_{k}^{N}  \in \mathcal{L}_{k} \right|\widetilde{\mathbf{x}}_{k+1}^{N} \in \mathcal{L}_{k+1}  \right) \\ \label{eq:pcl3}
& =  1 - \prod_{k=1}^{N} \left(1-\overline{\Pr} \left( \left. \widetilde{\mathbf{x}}_{k}^{N}  \notin \mathcal{L}_{k} \right|\widetilde{\mathbf{x}}_{k+1}^{N} \in \mathcal{L}_{k+1}  \right)\right),
\end{align}
where $\mathcal{L}_{k}$ denotes the  set of the paths selected at the $k$th level and $\overline{\Pr}(\cdot)$ is the probability given that
$\widetilde{\mathbf{x}}$ is sent.
Since we consider the case of $M=1$,  $\widetilde{\mathbf{x}}_{k+1}^{N} \in \mathcal{L}_{k+1}$ implies that a correct path has been selected up to the $(k+1)$th level.
With this setup and from (\ref{eq:partition22}), (\ref{eq:biastm1}), and (\ref{eq:bb2}),
 one can show that  $\gamma^{(l)}\left(\mathbf{x}_{k}^{N}\right)$ is given by
\begin{align}
\gamma^{(l)}\left(\mathbf{x}_{k}^{N}\right) =& \underbrace{\left\|\mathbf{y}_{k}^{N} - \mathbf{R}_{22,k}\begin{bmatrix}x_{k} \\ \widetilde{\mathbf{x}}_{k+1}^{N}\end{bmatrix} \right\|^{2}
+ \xi\left(x_{k}\right) + \xi\left(\widetilde{\mathbf{x}}_{k+1}^{N}\right)}_{\gamma^{(c)}\left(\mathbf{x}_{k}^{N}\right)}  + \underbrace{\left\|
 \mathbf{q}_{k} - \mathbf{P}_{k}\begin{bmatrix}x_{k} \\ \widetilde{\mathbf{x}}_{k+1}^{N}\end{bmatrix} \right\|^{2}}_{\gamma^{(b)}\left(\mathbf{x}_{k}^{N}\right)} \label{eq:ptt2}\\ \nonumber
   =&  \left|r_{k,k}\left(\widetilde{x}_{k} -x_{k}\right) + n_{k}  \right|^{2} + \sum_{i=k+1}^{N}|n_{i}|^{2}  + \left\|\mathbf{Z}_{k} \mathbf{r}_{k}\left(\widetilde{x}_{k}-x_{k}\right) + \mathbf{Z}_{k}\mathbf{b}_{k} \right\|^{2}
   + \xi\left(x_{k}\right) + \xi\left(\widetilde{\mathbf{x}}_{k+1}^{N}\right)  \\ =& \left\|\begin{bmatrix}\mathbf{Z}_{k}\mathbf{r}_{k} \\ r_{k,k}  \end{bmatrix}\left(\widetilde{x}_{k}-x_{k}\right) +
\begin{bmatrix} \mathbf{Z}_{k}\mathbf{b}_{k} \\ n_{k}  \end{bmatrix} \right\|^{2} + \xi (x_{k})+ \sum_{i=k+1}^{N}|n_{i}|^{2} + \xi\left(\widetilde{\mathbf{x}}_{k+1}^{N}\right) \\ \label{eq:ptt4}
=& \left\|\sqrt{\mathbf{r}_{k}^{H}\mathbf{Z}_{k}\mathbf{r}_{k} + |r_{k,k}|^{2}}\left(\widetilde{x}_{k}-x_{k}\right) +
\frac{\begin{bmatrix}\mathbf{Z}_{k}\mathbf{r}_{k} & r_{k,k}  \end{bmatrix}}{\sqrt{\mathbf{r}_{k}^{H}\mathbf{Z}_{k}\mathbf{r}_{k} + |r_{k,k}|^{2}}} \begin{bmatrix} \mathbf{Z}_{k}\mathbf{b}_{k} \\ n_{k}  \end{bmatrix}  \right\|^{2} + \xi (x_{k})+ C
\end{align}
where
 $\mathbf{b}_{k} = \mathbf{R}_{11,k}\left(\widetilde{\mathbf{x}}_{1}^{k-1} - \overline{\mathbf{x}}_{1}^{k-1}\right) + \mathbf{n}_{1}^{k-1}$, and $\mathbf{r}_{k} = \mathbf{R}_{12,k} \mathbf{e}_{1} = \left[r_{1,k}, \cdots, r_{k-1,k}\right]^{T}$.
Note that $C$ is independent of the selection of $x_{k}$.
%
%
%
%
%
The first term in (\ref{eq:ptt4}) can be interpreted as the distance metric between the output of a scalar additive noise channel
\begin{equation} \label{eq:additive}
\xi_{k} = \sqrt{\mathbf{r}_{k}^{H}\mathbf{Z}_{k}\mathbf{r}_{k} + |r_{k,k}|^{2}}\widetilde{x}_{k} +
\frac{\left[\mathbf{Z}_{k}\mathbf{r}_{k} \;\; r_{k,k}\right]}{\sqrt{\mathbf{r}_{k}^{H}\mathbf{Z}_{k}\mathbf{r}_{k} + |r_{k,k}|^{2}}} \left[ \left(\mathbf{Z}_{k}\mathbf{b}_{k}\right)^{T} \;\; n_{k}\right]^{T}
 \end{equation}
 and a symbol candidate $\sqrt{\mathbf{r}_{k}^{H}\mathbf{Z}_{k}\mathbf{r}_{k} + |r_{k,k}|^{2}}x_{k}$.
%
The ISS-MA chooses the $M$ best symbols $x_{k}$ according to the cost metric in (\ref{eq:ptt4}).
Since the \emph{a priori} term $\xi (x_{k})$ in (\ref{eq:ptt4}) leads  to better detection, we ignore the impact of it in our discussion.
If we let $E\left[\mathbf{b}_{k} \mathbf{b}_{k}^{H}\right] = \mathbf{\Sigma}_{k}= \left(\mathbf{R}_{11,k}\mathbf{\Lambda}_{k} \mathbf{R}_{11,k}^{H} + \sigma_{n}^{2} \mathbf{I}\right)$ and $\mathbf{Z}_{k} = \sigma_{n}^{2}\mathbf{\Sigma}_{k}^{-1}$, then the  signal to interference plus noise ratio (SINR) of  the scalar additive noise channel is given by
\begin{align} \label{eq:rhos}
{\rm SINR} =&  \frac{\left(\mathbf{r}_{k}^{H}\mathbf{Z}_{k}\mathbf{r}_{k} + |r_{k,k}|^{2}\right)^{2}}{\mathbf{r}_{k}^{H}\mathbf{Z}_{k}^{2}E\left[\mathbf{b}_{k} \mathbf{b}_{k}^{H}\right] \mathbf{Z}_{k}^{2} \mathbf{r}_{k} + \sigma_{n}^{2}|r_{k,k}|^{2}}  \\ \label{eq:rhos2}
=& \frac{1}{\sigma_{n}^{2}}\frac{\left(\mathbf{r}_{k}^{H}\left(\sigma_{n}^{4} \mathbf{\Sigma}_{k}^{-2}\right)\mathbf{r}_{k} + |r_{k,k}|^{2}\right)^{2}}{\mathbf{r}_{k}^{H}\left(\sigma_{n}^{6} \mathbf{\Sigma}_{k}^{-3}\right)\mathbf{r}_{k} + |r_{k,k}|^{2}}.
\end{align}
\begin{lemma}
\label{lemma:sinr_bound}
The SINR in (\ref{eq:rhos2}) is bounded by
\begin{align} \label{eq:bound}
\sigma_{n}^{2} \mathbf{r}_{k}^{H}\mathbf{\Sigma}_{k}^{-2} \mathbf{r}_{k} + \frac{|r_{k,k}|^{2}}{\sigma_{n}^{2}} \leq {\rm SINR} \leq \mathbf{r}_{k}^{H}\mathbf{\Sigma}_{k}^{-1} \mathbf{r}_{k} + \frac{|r_{k,k}|^{2}}{\sigma_{n}^{2}}.
\end{align}
\end{lemma}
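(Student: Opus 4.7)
The plan is to introduce the contractive operator $\mathbf{A} \triangleq \sigma_n^2 \mathbf{\Sigma}_k^{-1}$ so that the three matrix quadratic forms appearing in the SINR expression in (\ref{eq:rhos2}) simplify to a one-parameter family. Since $\mathbf{\Sigma}_k = \mathbf{R}_{11,k} \mathbf{\Lambda}_k \mathbf{R}_{11,k}^H + \sigma_n^2 \mathbf{I} \succeq \sigma_n^2 \mathbf{I}$, one has $\mathbf{0} \prec \mathbf{A} \preceq \mathbf{I}$. Observe that $\sigma_n^4 \mathbf{\Sigma}_k^{-2} = \mathbf{A}^2$ and $\sigma_n^6 \mathbf{\Sigma}_k^{-3} = \mathbf{A}^3$. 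Setting $w = |r_{k,k}|^2$ and defining the three scalars
\begin{equation*}
\alpha = \mathbf{r}_k^H \mathbf{A} \mathbf{r}_k, \qquad \beta = \mathbf{r}_k^H \mathbf{A}^2 \mathbf{r}_k, \qquad \gamma = \mathbf{r}_k^H \mathbf{A}^3 \mathbf{r}_k,
\end{equation*}
the SINR in (\ref{eq:rhos2}) becomes $(\beta + w)^2 / [\sigma_n^2(\gamma + w)]$. Moreover, the claimed lower and upper bounds in (\ref{eq:bound}) are precisely $(\beta + w)/\sigma_n^2$ and $(\alpha + w)/\sigma_n^2$, respectively, since $\sigma_n^2 \mathbf{r}_k^H \mathbf{\Sigma}_k^{-2} \mathbf{r}_k = \beta/\sigma_n^2$ and $\mathbf{r}_k^H \mathbf{\Sigma}_k^{-1} \mathbf{r}_k = \alpha/\sigma_n^2$.

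For the lower bound, I would note that $\mathbf{A} \preceq \mathbf{I}$ implies $\mathbf{A}^3 = \mathbf{A}^{1/2}\mathbf{A}\mathbf{A}^{1/2}\cdot\mathbf{A} \preceq \mathbf{A}^2$ (since $\mathbf{A}$ commutes with its powers), hence $\gamma \leq \beta$. Therefore $(\beta + w)^2 / (\gamma + w) \geq (\beta + w)^2/(\beta + w) = \beta + w$, which gives the lower bound after dividing by $\sigma_n^2$.

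For the upper bound, the key ingredient is the Cauchy--Schwarz inequality applied in the inner product induced by $\mathbf{A}$: writing $\beta = \langle \mathbf{A}^{1/2}\mathbf{r}_k, \mathbf{A}^{3/2}\mathbf{r}_k\rangle$, one obtains $\beta^2 \leq \alpha \gamma$. Combined with the AM--GM inequality $2\beta \leq 2\sqrt{\alpha\gamma} \leq \alpha + \gamma$, this yields
\begin{equation*}
(\beta + w)^2 = \beta^2 + 2\beta w + w^2 \leq \alpha \gamma + (\alpha + \gamma)w + w^2 = (\alpha + w)(\gamma + w),
\end{equation*}
so $(\beta+w)^2/(\gamma+w) \leq \alpha + w$, and dividing by $\sigma_n^2$ completes the upper bound.

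The only mildly subtle step is establishing $\beta^2 \leq \alpha\gamma$ cleanly as an operator/quadratic-form inequality; I expect this Cauchy--Schwarz argument to be the crux, while the remainder reduces to elementary algebra. Everything else—the commutativity of polynomials of $\mathbf{A}$, the monotonicity $\gamma \leq \beta$ from contractivity, and the AM--GM step—is routine once the substitution $\mathbf{A} = \sigma_n^2 \mathbf{\Sigma}_k^{-1}$ is made.
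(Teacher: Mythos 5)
Your proof is correct and takes essentially the same route as the paper's: the upper bound is Cauchy--Schwarz (the paper applies it once to the extended vectors in the eigenbasis of $\mathbf{\Sigma}_{k}$, whereas you split it into $\beta^{2}\leq\alpha\gamma$ plus AM--GM and an explicit factorization, which is equivalent), and the lower bound rests on the same contraction property $\sigma_{n}^{2}\mathbf{\Sigma}_{k}^{-1}\preceq\mathbf{I}$ that the paper states eigenvalue-by-eigenvalue as $\sigma_{n}^{2}\phi_{i}^{-1}\leq 1$. The only difference is presentational: you work coordinate-free with $\mathbf{A}=\sigma_{n}^{2}\mathbf{\Sigma}_{k}^{-1}$ while the paper diagonalizes $\mathbf{\Sigma}_{k}$ first.
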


\begin{proof}
See Appendix \ref{app:sinr_bound}.
\end{proof}
Taking similar steps, one can show that the SINR  for the causal path metric $\gamma^{(c)}\left(\mathbf{x}_{k}^{N}\right)$  is $\frac{|r_{k,k}|^{2}}{\sigma_{n}^{2}}$.
Hence, $\mathbf{r}_{k}^{H}\mathbf{\Sigma}_{k}^{-1} \mathbf{r}_{k}^{H}$ and $\sigma_{n}^{2} \mathbf{r}_{k}^{H}\mathbf{\Sigma}_{k}^{-2} \mathbf{r}_{k}^{H}$ can be regarded as
upper and lower bounds on the SINR gain achieved by the LE-LA path metric, respectively.
It is of interest to check the behavior of the upper and lower bound of SINR gain for high dimensional systems.
Suppose that   $N,L \rightarrow \infty$ with a fixed aspect ratio $\beta = {N}/{L}$ ($0<\beta \leq1$),
and let $\lambda_{\rm min}$ and $\lambda_{\rm max}$ be smallest and largest diagonals of  $\mathbf{\Lambda}_{k}$, respectively. Then, we attain a looser bound on the SINR is
\begin{align} \label{eq:bound2}
\underbrace{\sigma_{n}^{2} \mathbf{r}_{k}^{H}(\sigma_{n}^{2}\mathbf{I} + \lambda_{\rm max} \mathbf{R}_{11,k}\mathbf{R}_{11,k}^{H})^{-2} \mathbf{r}_{k}}_{B_k^{\rm lower}} + \frac{|r_{k,k}|^{2}}{\sigma_{n}^{2}} \leq {\rm SINR} \leq \underbrace{\mathbf{r}_{k}^{H}(\sigma_{n}^{2}\mathbf{I} + \lambda_{\rm min} \mathbf{R}_{11,k}\mathbf{R}_{11,k}^{H})^{-1} \mathbf{r}_{k}}_{B_k^{\rm upper}} + \frac{|r_{k,k}|^{2}}{\sigma_{n}^{2}},
\end{align}
where the upper and lower bound of the SINR gain are denoted as $B_k^{\rm upper}$ and $B_k^{\rm lower}$, respectively.
 Note that (\ref{eq:bound2}) can be shown by the relationship $\mathbf{B} \preceq \mathbf{\Sigma}_{k} \preceq \mathbf{A} $ (equivalently, $ \mathbf{\Sigma}_{k}^{-1}\preceq \mathbf{B}^{-1} $ and $\mathbf{A}^{-2} \preceq \mathbf{\Sigma}_{k}^{-2}  $), where $\mathbf{A} =  \sigma_{n}^{2}\mathbf{I} + \lambda_{\rm max} \mathbf{R}_{11,k}\mathbf{R}_{11,k}^{H}$ and $\mathbf{B} = \sigma_{n}^{2}\mathbf{I} + \lambda_{\rm min} \mathbf{R}_{11,k}\mathbf{R}_{11,k}^{H}$ and $X \succeq 0$ implies that the matrix $X$ is positive semi-definite.
%
%
%
\begin{theorem}
\label{theorem:random}
 For an $L\times N$  matrix $\mathbf{H}$ whose elements are i.i.d. random variables with zero mean and variance $\frac{1}{L}$, the upper and lower bound of the SINR gain
  for the level
$k = \gamma N + 1 \;\; (0<\gamma<1)$ converge to
\begin{align}
   B_k^{\rm upper} \label{eq:conv_sig} &\longrightarrow  B_k^{{\rm upper},\infty} =  \frac{1}{2\lambda_{\rm min}} \left(-1 - (1 - \gamma\beta)\frac{\lambda_{\rm min}}{\sigma_{n}^{2}}  + G\left(\frac{\lambda_{\rm min}}{\sigma_{n}^{2}}, \gamma\beta \right)
 \right)
\\ \label{eq:conv_sig2}
B_k^{\rm lower}  & \longrightarrow B_k^{{\rm lower},\infty} = \frac{1}{2\sigma_{n}^{2}} \left(- \left(1 - \gamma\beta\right) + \frac{1+\gamma \beta + (1 - \gamma \beta)^{2} \frac{\lambda_{\rm max}}{\sigma_{n}^{2}}}{G\left(\frac{\lambda_{\rm max}}{\sigma_{n}^{2}}, \gamma\beta \right)}\right)
\end{align}
as $N,L \rightarrow \infty$ with  $\beta = {N}/{L}$, where $G(x,b) = \sqrt{1+2(1+b)x + (1-b)^2x^{2}}$.

\end{theorem}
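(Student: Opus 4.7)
The plan is to combine two classical random matrix tools: the trace/quadratic-form concentration lemma (Bai--Silverstein type) to reduce the quadratic forms in $\mathbf{r}_k$ to normalized traces of resolvent matrices, and the Marchenko--Pastur (MP) limiting spectral distribution together with its Stieltjes transform to evaluate those traces in closed form.

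First I would unpack the QR structure. Let $\mathbf{H}_1=[\mathbf{h}_1,\cdots,\mathbf{h}_{k-1}]$ denote the first $k-1$ columns of $\mathbf{H}$ and $\mathbf{h}_k$ its $k$th column. The thin QR factorization $\mathbf{H}_1=\mathbf{Q}_{\le k-1}\mathbf{R}_{11,k}$, together with $\mathbf{r}_k=\mathbf{Q}_{\le k-1}^{H}\mathbf{h}_k$, yields two facts that drive the argument: (i) $\mathbf{R}_{11,k}\mathbf{R}_{11,k}^{H}$ has the same non-zero eigenvalues as $\mathbf{W}\triangleq\mathbf{H}_1^{H}\mathbf{H}_1$, an $(k-1)\times(k-1)$ sample-covariance-type matrix built from an $L\times(k-1)$ i.i.d.\ matrix; and (ii) since $\mathbf{h}_k$ is independent of $\mathbf{H}_1$, $E[\mathbf{r}_k\mathbf{r}_k^{H}\mid\mathbf{H}_1]=(1/L)\mathbf{I}_{k-1}$, so $\mathbf{r}_k$ plays the role of a vector with i.i.d.\ entries of variance $1/L$ that is effectively independent of the spectrum of $\mathbf{W}$.

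Next, I would apply the trace lemma: if $\mathbf{r}$ has i.i.d.\ mean-zero entries of variance $1/L$ and is independent of a Hermitian matrix $\mathbf{A}$ with uniformly bounded spectral norm, then $\mathbf{r}^{H}\mathbf{A}\mathbf{r}-(1/L){\rm tr}(\mathbf{A})\to 0$ almost surely. Taking $\mathbf{A}=(\sigma_{n}^{2}\mathbf{I}+\lambda_{\min}\mathbf{W})^{-1}$ and $\mathbf{A}=\sigma_{n}^{2}(\sigma_{n}^{2}\mathbf{I}+\lambda_{\max}\mathbf{W})^{-2}$ (both bounded since $\sigma_{n}^{2}>0$) converts $B_{k}^{\rm upper}$ and $B_{k}^{\rm lower}$ into the corresponding normalized traces. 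By the MP theorem, the empirical spectral distribution of $\mathbf{W}$ converges a.s.\ to the MP law $F_{y}$ with aspect ratio $y=\gamma\beta$, so these traces become $B_{k}^{\rm upper}\to(\gamma\beta/\lambda_{\min})\,m_{y}(-\sigma_{n}^{2}/\lambda_{\min})$ and $B_{k}^{\rm lower}\to-(\sigma_{n}^{2}\gamma\beta/\lambda_{\max}^{2})(d/da)\,m_{y}(-a)|_{a=\sigma_{n}^{2}/\lambda_{\max}}$, where $m_{y}$ is the Stieltjes transform of $F_{y}$.

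It then remains to evaluate $m_{y}$ in closed form. From the fixed-point equation $yzm^{2}-(1-y-z)m+1=0$, selecting the branch for which $m_{y}(z)\to-1/z$ as $|z|\to\infty$, and writing $\mu=\lambda_{\min}/\sigma_{n}^{2}$, I would factor $1/\mu$ out of the discriminant to identify $\sqrt{(1-y+1/\mu)^{2}+4y/\mu}=G(\mu,y)/\mu$, giving $m_{y}(-\sigma_{n}^{2}/\lambda_{\min})=[G(\mu,y)-1-\mu(1-y)]/(2y)$. Substitution reproduces the stated $B_{k}^{{\rm upper},\infty}$. For the lower bound, differentiating this closed form with respect to $a$ and simplifying yields $B_{k}^{{\rm lower},\infty}$. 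The main obstacle is the concentration step: the trace lemma requires a uniform spectral-norm bound on the resolvent (guaranteed by $\sigma_{n}^{2}>0$, independent of any near-singularity of $\mathbf{W}$ when $y<1$) together with the usual finite-moment condition on the entries of $\mathbf{H}$; once these are verified, the remaining work is routine algebra on the MP Stieltjes transform and a single derivative.
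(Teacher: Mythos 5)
Your proposal is correct and follows essentially the same route as the paper's proof: identify the eigenvalues of $\mathbf{R}_{11,k}\mathbf{R}_{11,k}^{H}$ with those of $\mathbf{H}_{1:k-1}^{H}\mathbf{H}_{1:k-1}$ via the QR factorization, apply the quadratic-form concentration lemma (the paper cites Lemma 2.29 of Tulino--Verd\'u) to reduce $B_k^{\rm upper}$ and $B_k^{\rm lower}$ to normalized traces of resolvents, and evaluate the resulting integrals against the Mar\v{c}enko--Pastur law. Your explicit derivation of the closed forms via the Stieltjes-transform fixed-point equation and its derivative is merely a more detailed account of the final algebra that the paper states directly.
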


\begin{proof}
See Appendix \ref{app:random}.
\end{proof}

\begin{corollary}
\label{corol:asymp}
As  $\sigma_{n}^{2} \rightarrow 0$, we have
\begin{align} \label{eq:stop}
B_k^{{\rm upper},\infty} & \longrightarrow {\lambda_{\rm min}}\frac{\gamma\beta}{(1 - \gamma\beta)} \\ \label{eq:stop2}
B_k^{{\rm lower},\infty} & \longrightarrow 0.
\end{align}
In addition, as $\sigma_{n}^{2} \rightarrow 0$, $B_k^{{\rm upper},\infty}$  monotonically increases and approaches ${\lambda_{\rm min}}\frac{\gamma\beta}{(1 - \gamma\beta)} $.
\end{corollary}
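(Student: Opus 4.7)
The plan is to treat each bound as a function of the single dimensionless parameter $u=\lambda_{\min}/\sigma_n^2$ (for the upper bound) or $v=\lambda_{\max}/\sigma_n^2$ (for the lower bound), let that parameter go to infinity, and carry out a direct asymptotic expansion of the closed forms from Theorem \ref{theorem:random}. The monotonicity claim is then a one-variable calculus exercise.

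For the upper bound (\ref{eq:stop}), the key manipulation is to rationalize the difference $G(u,\gamma\beta)-(1-\gamma\beta)u$ using the standard trick
\[
G(u,\gamma\beta)-(1-\gamma\beta)u=\frac{G(u,\gamma\beta)^{2}-((1-\gamma\beta)u)^{2}}{G(u,\gamma\beta)+(1-\gamma\beta)u}=\frac{1+2(1+\gamma\beta)u}{G(u,\gamma\beta)+(1-\gamma\beta)u}.
\]
Since $G(u,\gamma\beta)\sim (1-\gamma\beta)u$ as $u\to\infty$, the ratio converges to $(1+\gamma\beta)/(1-\gamma\beta)$. Subtracting the residual $-1$ inside the parentheses of $B_k^{{\rm upper},\infty}$ gives the bracket limit $2\gamma\beta/(1-\gamma\beta)$, and the overall $1/(2\lambda_{\min})$ prefactor yields the claimed constant.

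For the lower bound (\ref{eq:stop2}) I would substitute $\epsilon=1/v=\sigma_n^{2}/\lambda_{\max}$ so that $B_k^{{\rm lower},\infty}$ becomes
\[
\frac{1}{2\sigma_{n}^{2}}\!\left(-(1-\gamma\beta)+\frac{(1-\gamma\beta)^{2}+(1+\gamma\beta)\epsilon}{\sqrt{(1-\gamma\beta)^{2}+2(1+\gamma\beta)\epsilon+\epsilon^{2}}}\right),
\]
and expand numerator and denominator of the ratio to second order in $\epsilon$. The value at $\epsilon=0$ is exactly $1-\gamma\beta$, so the constant term cancels the leading $-(1-\gamma\beta)$; moreover the first-order-in-$\epsilon$ terms also cancel, leaving a discrepancy of order $\epsilon^{2}$. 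Since the outer prefactor $1/(2\sigma_n^{2})$ is only of order $\epsilon^{-1}$, the product is $O(\sigma_n^{2})$ and vanishes.

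The monotonicity statement is the simplest piece. Differentiating the expression for $B_k^{{\rm upper},\infty}$ with respect to $u$ and requiring $dB_k^{{\rm upper},\infty}/du>0$ reduces, after clearing the square root, to
\[
\bigl[(1+\gamma\beta)+(1-\gamma\beta)^{2}u\bigr]^{2}>(1-\gamma\beta)^{2}\bigl[1+2(1+\gamma\beta)u+(1-\gamma\beta)^{2}u^{2}\bigr].
\]
All $u$-dependent cross terms cancel exactly, and what remains is $(1+\gamma\beta)^{2}>(1-\gamma\beta)^{2}$, i.e. $4\gamma\beta>0$, which holds since $\gamma,\beta\in(0,1)$. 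Thus $B_k^{{\rm upper},\infty}$ is strictly increasing in $u$, equivalently strictly decreasing in $\sigma_n^{2}$, so it climbs monotonically to its limit as $\sigma_n^{2}\to 0$. The only real obstacle is the lower-bound calculation: a single-order expansion yields an indeterminate $0\cdot\infty$, so both the numerator and $G$ must be expanded to $O(\epsilon^{2})$ before the cancellation that produces the true decay rate becomes visible.
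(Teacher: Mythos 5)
The paper states this corollary with no proof at all, so there is no "official" argument to compare against; your direct asymptotic expansion of the closed forms in Theorem \ref{theorem:random} is the natural route, and all three of your computations check out. For the upper bound, rationalizing $G(u,\gamma\beta)-(1-\gamma\beta)u$ correctly gives the bracket limit $2\gamma\beta/(1-\gamma\beta)$; for the lower bound, the constant and first-order terms in $\epsilon=\sigma_n^2/\lambda_{\max}$ do both cancel (one can verify $\frac{d}{d\epsilon}\frac{N}{D}\big|_{0}=\frac{(1+b)(1-b)-(1-b)^2\cdot\frac{1+b}{1-b}}{(1-b)^2}=0$ with $b=\gamma\beta$), leaving an $O(\epsilon^2)$ bracket against an $O(\epsilon^{-1})$ prefactor, hence the $O(\sigma_n^2)$ decay; and the monotonicity reduction to $(1+\gamma\beta)^2>(1-\gamma\beta)^2$ is exactly right, with the sign-of-both-sides check needed before squaring being satisfied since $(1+b)+(1-b)^2u>0$.

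The one place you should not have waved your hands is the final identification of the upper-bound constant. Your bracket limit $2\gamma\beta/(1-\gamma\beta)$ multiplied by the prefactor $1/(2\lambda_{\min})$ gives $\frac{1}{\lambda_{\min}}\cdot\frac{\gamma\beta}{1-\gamma\beta}$, whereas the corollary as printed asserts the limit is $\lambda_{\min}\cdot\frac{\gamma\beta}{1-\gamma\beta}$; these agree only when $\lambda_{\min}=1$ (the first IDD iteration). A sanity check via the integral representation in Appendix \ref{app:random} confirms your version: as $\sigma_n^2\to 0$ the integrand tends to $\gamma\beta/(\lambda_{\min}x)$, and the Mar\v{c}enko--Pastur law gives $\int x^{-1}f_{\eta}^{o}(x)\,dx=1/(1-\gamma\beta)$, so the true limit is $\gamma\beta/(\lambda_{\min}(1-\gamma\beta))$. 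The statement in the paper therefore appears to carry a typo ($\lambda_{\min}$ where $\lambda_{\min}^{-1}$ belongs), and your proof should say so explicitly rather than asserting that your (correct) constant "yields the claimed" one.
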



We can deduce from (\ref{eq:stop}) and (\ref{eq:stop2}) that the actual SINR gain approaches a deterministic value  between $[0,{\lambda_{\rm min}}\frac{\gamma\beta}{(1 - \gamma\beta)}]$.
%
%
One can also show that both $B_k^{{\rm upper},\infty}$ and $B_k^{{\rm lower},\infty}$ are an increasing functions of $\gamma\beta \in (0,1)$.
Noting that  $\gamma$ indicates an index for tree depth,
the SINR bounds achieve their maximum at the top level of the tree $(k=N)$.

Next, we analyze the probability of CPL using the SINR obtained.
It is worth taking a close look at the term $\mathbf{Z}_{k}\mathbf{b}_{k}$ in (\ref{eq:additive}).
Recalling that $\mathbf{b}_{k} = \mathbf{R}_{11,k}\left(\widetilde{\mathbf{x}}_{1}^{k-1} - \overline{\mathbf{x}}_{1}^{k-1}\right) + \mathbf{n}_{1}^{k-1}$ and $\mathbf{Z}_{k} = \sigma_{n}^{2}\left(\mathbf{R}_{11,k}\mathbf{\Lambda}_{k}(\mathbf{R}_{11,k})^{H} + \sigma_{n}^{2}\mathbf{I}\right)^{-1}$, $\mathbf{Z}_{k}\mathbf{b}_{k}$ is an MMSE estimate of $\mathbf{n}_{1}^{k-1}$ \cite{poor_book}.
In order to make the derivation of the CPL probability more tractable, we use a Gaussian approximation for the MMSE estimation error $(\mathbf{n}_{k} - \mathbf{Z}_{k}\mathbf{b}_{k})$ or
 equivalently, $\mathbf{Z}_{k}\mathbf{b}_{k}$.
Under this approximation, we can assume that the interference plus noise of the scalar channel is Gaussian.
The validity of this approximation has been supported in many asymptotic scenarios in \cite{poor2} and \cite{ping}.
%
%
In particular, it is shown  that  the Gaussian approximation is highly accurate for large problem size $N$ \cite{dguo}.
Using the SINR in (\ref{eq:rhos2}), the probability of CPL for the $k$th level detection  can be expressed as \cite{proakis}
\begin{align} \label{eq:exact}
\underline{\Pr}  \left( \left. \widetilde{\mathbf{x}}_{k}^{N}  \notin \mathcal{L}_{k} \right|\widetilde{\mathbf{x}}_{k+1}^{N} \in \mathcal{L}_{k+1}, \mathbf{H}  \right)
\leq  4\left(1-\frac{1}{\sqrt{2^Q}}\right) Q\left(\sqrt{
K\frac{1}{\sigma_{n}^{2}}\frac{\left(\mathbf{r}_{k}^{H}\left(\sigma_{n}^{4} \mathbf{\Sigma}_{k}^{-2}\right)\mathbf{r}_{k} + |r_{k,k}|^{2}\right)^{2}}{\mathbf{r}_{k}^{H}\left(\sigma_{n}^{6} \mathbf{\Sigma}_{k}^{-3}\right)\mathbf{r}_{k} + |r_{k,k}|^{2}}}\right),
\end{align}
where  $K= \frac{3}{\left(2^{Q}-1\right)}$. The inequality in (\ref{eq:exact}) follows from the existence of \emph{a priori} terms in (\ref{eq:ptt4}), which lowers the actual CPL
probability. From (\ref{eq:bound}), we have
\begin{align} \label{eq:cplq}
 \underline{\Pr}  \left( \left. \widetilde{\mathbf{x}}_{k}^{N}  \notin \mathcal{L}_{k} \right|\widetilde{\mathbf{x}}_{k+1}^{N} \in \mathcal{L}_{k+1}, \mathbf{H}  \right)
       \leq 4\left(1-\frac{1}{\sqrt{2^Q}}\right) Q\left(\sqrt{K\left(\sigma_{n}^{2} \mathbf{r}_{k}^{H}\mathbf{\Sigma}_{k}^{-2} \mathbf{r}_{k} + \frac{|r_{k,k}|^{2}}{\sigma_{n}^{2}}
      \right)}\right).
\end{align}
%

%
Using (\ref{eq:cplq}), we can analyze an average probability of CPL for a random channel $\mathbf{H}$ whose elements are independent complex Gaussian with $\mathcal{CN}(0,1)$.
The average probability of CPL, denoted as $\overline{P}_{\rm CPL}$, is given by
\begin{align}
\overline{P}_{\rm CPL} =&   1 - E_{\mathbf{H}}\left[\prod_{k=1}^{N} \left(1-\underline{\Pr} \left( \left. \widetilde{\mathbf{x}}_{k}^{N}  \notin \mathcal{L}_{k} \right|\widetilde{\mathbf{x}}_{k+1}^{N} \in \mathcal{L}_{k+1}, \mathbf{H}  \right)\right)\right] \\
 =& \sum_{k=1}^{N} E_{\mathbf{H}}\left[ \underline{\Pr} \left( \left. \widetilde{\mathbf{x}}_{k}^{N}  \notin \mathcal{L}_{k} \right|\widetilde{\mathbf{x}}_{k+1}^{N} \in \mathcal{L}_{k+1}, \mathbf{H}  \right) \right] + \mbox{higher order terms},
\end{align}
where $E_{\mathbf{H}}[\cdot]$ denotes the expectation over $\mathbf{H}$.
The average CPL probability is obtained after evaluating $E_{\mathbf{H}}\left[ \underline{\Pr} \left( \left. \widetilde{\mathbf{x}}_{k}^{N}  \notin \mathcal{L}_{k} \right|\widetilde{\mathbf{x}}_{k+1}^{N} \in \mathcal{L}_{k+1}, \mathbf{H}  \right) \right]$ for all $k$.
In our analysis, we do not put our emphasis on the higher order terms since they become negligible in the high SNR regime.
Using the relationship $Q(\sqrt{x+y}) \leq Q(\sqrt{x})\exp\left(-\frac{y}{2}\right)$ for $x,y > 0$ and from (\ref{eq:cplq}), we have
\begin{align}
& E_{\mathbf{H}}\left[ \underline{\Pr} \left( \left. \widetilde{\mathbf{x}}_{k}^{N}  \notin \mathcal{L}_{k} \right|\widetilde{\mathbf{x}}_{k+1}^{N} \in \mathcal{L}_{k+1}, \mathbf{H}  \right) \right] \\
  &\leq 4\left(1-\frac{1}{\sqrt{2^Q}}\right) E_{\mathbf{H}}\left[Q\left(\sqrt{K\frac{|r_{k,k}|^{2}}{\sigma_{n}^{2}}}\right) \exp\left(-\frac{K\sigma_{n}^{2}\mathbf{r}_{k}^{H}\mathbf{\Sigma}_{k}^{-2}\mathbf{r}_{k}}{2}
   \right)\right]  \\& =  4\left(1-\frac{1}{\sqrt{2^Q}}\right) E_{\mathbf{H}}\left[Q\left(\sqrt{K\frac{|r_{k,k}|^{2}}{\sigma_{n}^{2}}}\right)\right]\label{eq:avg} E_{\mathbf{H}}\left[ \exp\left(-\frac{K\sigma_{n}^{2}\mathbf{r}_{k}^{H}\mathbf{\Sigma}_{k}^{-2}\mathbf{r}_{k}}{2}
   \right)
   \right] ,
\end{align}
where (\ref{eq:avg}) follows from independence of $r_{k,k}$ and $\mathbf{r}_{k}$.
Noting that $r_{k,k}$ has a Chi-square distribution with $2(L-k+1)$ degrees of freedom and $\mathbf{r}_{k}$ has independent complex Gaussian elements \cite[\emph{Lemma 2.1}]{verdu_mono},
we have \cite{viswanath},
%
%
\begin{align}
E_{\mathbf{H}}\left[Q\left(\sqrt{\frac{K}{\sigma_{n}^{2}}|r_{k,k}|^{2}}\right)\right] =& \left(\frac{1}{2} - \frac{1}{2} \sqrt{\frac{K}{K + 2\sigma_{n}^{2}}}\right)^{L-k+1}  \sum_{l=0}^{L-k}{L-k+l \choose l}\left(\frac{1}{2} + \frac{1}{2} \sqrt{\frac{K}{K + 2\sigma_{n}^{2}}}\right)^{l} \label{eq:first1}
\end{align}
%
%
%
\begin{lemma}
\label{lemma:scaling}
An upper bound on the scaling gain in (\ref{eq:avg}) is given by
\begin{align}
E_{\mathbf{H}}\left[ \exp\left(-\frac{K\sigma_{n}^{2}\mathbf{r}_{k}^{H}\mathbf{\Sigma}_{k}^{-2}\mathbf{r}_{k}}{2}
   \right)\right] \leq& \int_{0}^{\infty} \cdots  \int_{0}^{\infty} \left(\prod_{i=1}^{k-1}  \frac{1}{1+\frac{K}{2}\frac{\sigma_{n}^{2}}{\left(\lambda_{\rm max}x_{i}+\sigma_{n}^{2}\right)^{2}}}\right) \nonumber \\  \label{eq:scale}
   &\times f_{\eta_{1}, \cdots, \eta_{k-1}}\left(x_{1}, \cdots, x_{k-1}\right) dx_{1} \cdots dx_{k-1},
\end{align}
\end{lemma}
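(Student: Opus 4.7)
The plan is to integrate out the complex Gaussian vector $\mathbf{r}_{k}$ in closed form first, then control the resulting determinant by a scalar bound on each eigenvalue, and finally integrate against the joint eigenvalue density of $\mathbf{R}_{11,k}\mathbf{R}_{11,k}^{H}$. Throughout, the key is that the left-hand side is already conditioned on $\mathbf{H}$, so I can condition on $\mathbf{R}_{11,k}$ (which fixes $\mathbf{\Sigma}_{k}$) and treat $\mathbf{r}_{k}$ as an independent $\mathcal{CN}(\mathbf{0},\mathbf{I})$ vector — its components are drawn from the off-diagonal block of $\mathbf{R}$ and are independent of the diagonal block that forms $\mathbf{R}_{11,k}$.

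First I would apply the standard MGF for a Hermitian quadratic form in a standard complex Gaussian vector: for any Hermitian PSD matrix $\mathbf{A}$,
\begin{equation*}
E_{\mathbf{r}_{k}}\!\left[\exp\!\left(-\mathbf{r}_{k}^{H}\mathbf{A}\,\mathbf{r}_{k}\right)\right] \;=\; \frac{1}{\det(\mathbf{I}+\mathbf{A})}.
\end{equation*}
Taking $\mathbf{A}=\tfrac{K\sigma_{n}^{2}}{2}\mathbf{\Sigma}_{k}^{-2}$ and diagonalizing $\mathbf{\Sigma}_{k}^{-2}$ by the unitary that diagonalizes the Hermitian PSD matrix $\mathbf{R}_{11,k}\mathbf{\Lambda}_{k}\mathbf{R}_{11,k}^{H}$, let $\mu_{1},\dots,\mu_{k-1}$ denote the eigenvalues of $\mathbf{R}_{11,k}\mathbf{\Lambda}_{k}\mathbf{R}_{11,k}^{H}$. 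Then the eigenvalues of $\mathbf{\Sigma}_{k}^{-2}$ are $(\mu_{i}+\sigma_{n}^{2})^{-2}$ and the conditional expectation reduces to
\begin{equation*}
E_{\mathbf{r}_{k}}\!\left[\exp\!\left(-\tfrac{K\sigma_{n}^{2}}{2}\,\mathbf{r}_{k}^{H}\mathbf{\Sigma}_{k}^{-2}\mathbf{r}_{k}\right)\bigg|\mathbf{R}_{11,k}\right] \;=\; \prod_{i=1}^{k-1} \frac{1}{1+\frac{K}{2}\frac{\sigma_{n}^{2}}{(\mu_{i}+\sigma_{n}^{2})^{2}}}.
\end{equation*}

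Second, I would replace $\mu_{i}$ by $\lambda_{\max}\eta_{i}$, where $\eta_{i}$ are the eigenvalues of $\mathbf{R}_{11,k}\mathbf{R}_{11,k}^{H}$. Because $\mathbf{\Lambda}_{k}\preceq\lambda_{\max}\mathbf{I}$, we have the PSD inequality $\mathbf{R}_{11,k}\mathbf{\Lambda}_{k}\mathbf{R}_{11,k}^{H}\preceq\lambda_{\max}\mathbf{R}_{11,k}\mathbf{R}_{11,k}^{H}$. By Weyl's monotonicity theorem, the ordered eigenvalues then satisfy $\mu_{i}\leq\lambda_{\max}\eta_{i}$ for every $i$. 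Since the scalar map $t\mapsto \bigl(1+\tfrac{K\sigma_{n}^{2}/2}{(t+\sigma_{n}^{2})^{2}}\bigr)^{-1}$ is increasing in $t\geq 0$, the termwise bound
\begin{equation*}
\frac{1}{1+\frac{K}{2}\frac{\sigma_{n}^{2}}{(\mu_{i}+\sigma_{n}^{2})^{2}}} \;\leq\; \frac{1}{1+\frac{K}{2}\frac{\sigma_{n}^{2}}{(\lambda_{\max}\eta_{i}+\sigma_{n}^{2})^{2}}}
\end{equation*}
holds for each $i$, and multiplying over $i$ preserves the inequality.

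Finally I would take the outer expectation over $\mathbf{R}_{11,k}$, which by a change of variables to the joint density $f_{\eta_{1},\dots,\eta_{k-1}}$ of the eigenvalues of $\mathbf{R}_{11,k}\mathbf{R}_{11,k}^{H}$ yields the multiple integral on the right-hand side of \eqref{eq:scale}. The main technical point, and the one deserving the most care, is the second step: although PSD ordering does not in general survive squaring, the bound is applied only after diagonalizing — i.e., at the level of individual eigenvalues via Weyl — so only the ordered-eigenvalue comparison $\mu_{i}\leq\lambda_{\max}\eta_{i}$ is invoked, not a matrix inequality on $\mathbf{\Sigma}_{k}^{-2}$ itself. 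Everything else is routine: the Gaussian MGF computation and the change of variables to the eigenvalue density.
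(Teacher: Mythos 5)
Your proof is correct and reaches the paper's bound with the same basic ingredients (the complex Gaussian quadratic-form MGF, independence of $\mathbf{r}_{k}$ from $\mathbf{R}_{11,k}$, and the joint Wishart eigenvalue density), but you apply the $\lambda_{\rm max}$ relaxation at a different point in the chain, and this is worth noting. The paper first replaces $\mathbf{\Sigma}_{k}^{-2}$ by $\left(\sigma_{n}^{2}\mathbf{I}+\lambda_{\rm max}\mathbf{R}_{11,k}\mathbf{R}_{11,k}^{H}\right)^{-2}$ inside the exponent, justifying this earlier in the text by the matrix inequality $\mathbf{A}^{-2}\preceq\mathbf{\Sigma}_{k}^{-2}$ deduced from $\mathbf{\Sigma}_{k}\preceq\mathbf{A}$; it then diagonalizes and integrates out the $r_{i,k}$ factor by factor via the tower property. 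As stated, that matrix step is delicate: $t\mapsto t^{2}$ is not operator monotone, so $\mathbf{\Sigma}_{k}\preceq\mathbf{A}$ does not by itself give $\mathbf{A}^{-2}\preceq\mathbf{\Sigma}_{k}^{-2}$ when $\mathbf{R}_{11,k}\mathbf{\Lambda}_{k}\mathbf{R}_{11,k}^{H}$ and $\mathbf{R}_{11,k}\mathbf{R}_{11,k}^{H}$ do not commute. Your route — integrate out $\mathbf{r}_{k}$ exactly to get $\det\bigl(\mathbf{I}+\tfrac{K\sigma_{n}^{2}}{2}\mathbf{\Sigma}_{k}^{-2}\bigr)^{-1}$, then compare ordered eigenvalues via Weyl's monotonicity ($\mu_{i}\le\lambda_{\rm max}\eta_{i}$) and the scalar monotonicity of $t\mapsto\bigl(1+\tfrac{K\sigma_{n}^{2}/2}{(t+\sigma_{n}^{2})^{2}}\bigr)^{-1}$ — sidesteps this issue entirely, since the determinant depends only on eigenvalues and the integrand is a symmetric function of them. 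So your argument not only proves the lemma but supplies the cleaner justification for the one step of the paper's proof that is shaky as a literal positive-semidefinite ordering; the paper's version is otherwise the same computation in the reverse order.
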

where 
\begin{align}\nonumber
f_{\eta_{1}, \cdots, \eta_{k-1}}\left(x_{1}, \cdots, x_{k-1}\right) = \frac{1}{(k-1)!}\exp\left(-\sum_{i=1}^{k-1} x_{i}\right) \prod_{i=1}^{k-1}\frac{x_{i}^{L-k+1}}{\left(k-1-i\right)!(L-i)!} \prod_{i<j}^{k-1} \left(x_{i} - x_{j}\right)^{2}.
\end{align}
\begin{proof}
See Appendix \ref{app:scaling}.
\end{proof}
While  $\exp\left(-\frac{K\sigma_{n}^{2}\mathbf{r}_{k}^{H}\mathbf{\Sigma}_{k}^{-2}\mathbf{r}_{k}}{2}\right)$ in (\ref{eq:scale}) tends to one as $\sigma_{n}^{2} \rightarrow 0$,
 (\ref{eq:first1}) decreases to zero with a slope  $\lim_{\sigma_{n}^{2} \rightarrow 0 }\ln(P_{e})/\ln(\sigma_{n}^{2}) = L-k+1$.
Therefore, at high SNR, the  probability of CPL for the top level $(k=N)$ would dominate, i.e.,
\begin{equation} \label{eq:fin}
\overline{P}_{\rm CPL} \lessapprox 4\left(1-\frac{1}{\sqrt{2^Q}}\right) E_{\mathbf{H}}\left[Q\left(\sqrt{K\frac{|r_{N,N}|^{2}}{\sigma_{n}^{2}}}\right)\right] E_{\mathbf{H}}\left[ \exp\left(-\frac{K\sigma_{n}^{2}\mathbf{r}_{N}^{H}\mathbf{\Sigma}_{N}^{-2}\mathbf{r}_{N}}{2}
   \right)
   \right],
\end{equation}
where the right-hand side is obtained from (\ref{eq:first1}) and (\ref{eq:scale}).
Following similar steps, we can also show that the upper bound of the average CPL probability for the causal path metric  becomes
\begin{equation} \label{eq:fin_causal}
\overline{P}_{\rm CPL}^{\rm causal} \lessapprox  4\left(1-\frac{1}{\sqrt{2^Q}}\right)E_{\mathbf{H}} \left[Q\left(\sqrt{K\frac{|r_{N,N}|^{2}}{\sigma_{n}^{2}}}\right)\right].
 \end{equation}
We observe from (\ref{eq:fin}) that the average CPL probability of the LE-LA path metric is smaller than that of the causal path metric by  the factor of $E_{\mathbf{H}}\left[ \exp\left(-\frac{K}{2}\sigma_{n}^{2}\mathbf{r}_{N}^{H}\mathbf{\Sigma}_{N}^{-2}\mathbf{r}_{N}
   \right)\right]$. Since this term is  strictly less than unity, it corresponds to the scaling gain obtained from the LE-LA path metric.

 In Fig.~\ref{fig:anal}, we provide the plot of the average CPL probability versus SNR  for several system sizes ($N=5,10,15$, and $20$).
We assume uncoded QPSK transmission.
 The average CPL rate and its upper bound are obtained from (\ref{eq:exact}) and (\ref{eq:cplq}).
 %
%
For a comprehensive view, we also include the average CPL rate for the causal path metric in (\ref{eq:fin_causal}).
%
%
For all cases considered, the CPL expression in (\ref{eq:exact}) is quite close to that obtained from the simulation results, supporting the accuracy of the analytic bound we obtained.
In particular, the upper bound of the average CPL rate appears tight at high SNR.
Fig.~\ref{fig:scale} shows how the scaling gain in (\ref{eq:scale}) varies as a function of SNR and system size.
We observe that the performance gain of the LE-LA path metric improves with system size and the maximum is achieved in low to moderate SNR range (10 dB $\sim$ 20 dB).
Notice that this behavior is desirable for IDD, since the performance in low-to-mid SNR range is critical in triggering performance improvement though iterations \cite{brink}.
%



\section{Simulation and Discussion}
\label{sec:simulation}
In this section,  we evaluate the performance of the ISS-MA through computer simulations.
  First, we observe the performance of the soft-input soft-output $M$-algorithm employing the LE-LA path metric and that employing the conventional path metric.
%
Note that the LE-LA path metric is not restricted to a particular search scheme and can be extended to more sophisticated breadth-first search algorithms (such as \cite{milliner} and \cite{guo}).
Next, we compare the performance-complexity trade-off of the ISS-MA with the existing soft-input soft-output detectors.


\subsection{Simulation Setup}
\label{subsec:setup}
The simulation setup for the IDD system is as follows. A total of $2\times 10^5$ information bits are randomly generated. A rate $R = 1/2$ recursive systematic convolutional (RSC) code with feedback polynomial $1+D+D^2$ and feedforward polynomial $1+D^2$.
We use a random interleaver of size of $12,000$ bits. We use a gray mapping for QAM modulation.
We assume fast fading channels where each entry of $\mathbf{H}$ is i.i.d. complex Gaussian $\mathcal{CN}(0,1)$ and
perfect knowledge of the channel state at the receiver is assumed.
For the  channel decoding, a \emph{max-log-MAP} decoder  \cite{logmapsova} is employed.
%
%
%
%
%
%
%
%
%
The SNR is defined as ${\rm SNR} = 10 \log_{10} ({N}/{\sigma_n^2})$.
%
%
%
 Computational  complexity of detectors is measured by counting the average number of complex multiplications per symbol period and per  iteration.\footnote{The complexity for QR decomposition and detection ordering is not considered since they are common in all detection algorithms under consideration.}

 \subsection{Simulation Results}
 \label{subsec:result}
First, we compare the performance of the causal path metric and
the LE-LA path metric.
%
%
We consider the $12\times 12$ 16-QAM MIMO system, which requires high detection complexity.
For fair comparison, we employ the same candidate extension strategy with $J=16$ (described in Section \ref{subsec:application}) for
both algorithms.
The parameter $N_{l}$ is set to 5 for the LE-LA path metric.
In Fig~\ref{fig:bervsm},  the plots of bit error rate (BER) versus SNR are provided for several $M$ values ($M=4,6,8$ and $12)$.
Each plot shows the BER curves obtained after a different number of iterations.
 The ISS-MA outperforms the conventional $M$-algorithm for all $M$ values and after each iteration.
In particular, with $M=4$,  the ISS-MA shows remarkable performance gain (more than 5 dB gain).
Then, the performance gap decreases as $M$ increases.
%
%
%
Note that the ISS-MA maintains strong performance even with small $M$  (e.g. $M=4$).
 Table~\ref{tb:comp1} provides computational complexity of both algorithms along with
the SNR required to achieve the BER of $10^{-2}$ for the same setup. The SNR is measured after the 7th iteration.
%
%
In order to compare performance-complexity trade-off, it is worth looking at the performance of the ISS-MA with $M=4$ and the $M$-algorithm with $M=8$, where both algorithms require similar computational complexity.
In these cases, the ISS-MA achieves almost 1 dB performance gain.
%
%
 We can additionally observe that the performance of the ISS-MA converges faster than the conventional $M$-algorithm, which might also help reducing
the complexity of the ISS-MA by the early termination of the iterations.

Next, we take a look at how the performance gap between the ISS-MA and the conventional $M$-algorithm changes in terms of different system size.
Table \ref{tb:comp2} presents the SNR at $10^{-3}$ BER and complexity of both algorithms for $N=L=6,8,10,$ and $12$.
 16-QAM is used and $N_l$ and $M$ are set to $5$ and $6$ for all cases. The performance is measured after the 7th iteration.
We observe that the performance gain due to the LE-LA path metric increases with system size.
In particular, the gain of the ISS-MA for $6\times6$ system is 0.5 dB and
that increases to 1.75 dB for the $12 \times 12$ system.
%
%
This clearly demonstrates that future cost plays a key role for large systems.

We next investigate the performance of the ISS-MA as a function of the parameter $N_l$ (see Section \ref{subsec:efficient}).
In our simulations, the $12 \times 12$ 16-QAM transmission is considered and $M$ is set to $8$.
Fig.~\ref{fig:nt} (a) and (b) show the performance and complexity of the ISS-MA for different $N_l$.
Note that  the ISS-MA with $N_l=0$ reduces to the conventional $M$-algorithm.
%
%
%
%
%
As the parameter $N_l$ increases, the ISS-MA accounts for the further future cost so that the computational complexity increases and performance improves.
  The ISS-MA offers performance-complexity trade-off through  $N_l$.
While the performance of the ISS-MA improves much  for small $N_l$ values,
the effect of $N_l$ diminishes with larger $N_l$.
It is shown that the choice of $N_l=5$ is sufficient to achieve the maximal performance gain offered by the LE-LA path metric for the cases considered.

We also take a look at the performance of the ISS-MA for spatially correlated MIMO channels. We model a correlated MIMO channel as $\mathbf{H}_{c} = R_{r}^{1/2} \cdot \mathbf{H} \cdot R_{t}^{1/2}$, where $R_{t}$ is the $N \times N$ transmit correlation matrix  and  $R_{r}$ is the $L \times L$ receive correlation matrix.
Fig. \ref{fig:correlated} shows the plot of BER vs. SNR of the ISS-MA and the conventional soft-input soft-output $M$-algorithm for correlated channels with
\begin{align}
R_t = R_r =  \begin{bmatrix}
1 & 0.8 & 0.8^2 & \cdots & 0.8^{12} \\
0.8 & 1 & 0.8 &  \cdots & 0.8^{11} \\
0.8^2 & 0.8 & 1 &  \cdots & 0.8^{10} \\
\vdots & \ddots & \ddots & \ddots & \vdots \\
0.8^{12} & 0.8^{11} & 0.8^{10} &  \cdots & 1 \\
\end{bmatrix}.
\end{align}
Note that two antennas are less correlated as the space between them increases.
The $12 \times 12$ 16-QAM system is considered and the parameters $M$ and $N_l$ are set to $12$ and $5$, respectively.
%
%
%
Comparing the result shown in Fig. \ref{fig:correlated}  with that in Fig. \ref{fig:bervsm} (d), we observe that the performance of both algorithms degrades in correlated channels, but the performance gain of the ISS-MA over the conventional algorithm is even larger.
From this observation, we can deduce that the LE-LA path metric can be more powerful when channel gains are correlated.

%
%
%
%
%
%

Finally, we check the performance-complexity trade-off of the ISS-MA along with those of the existing soft-input soft-output tree detection algorithms.
For a comprehensive picture, we consider the following algorithms;
\begin{enumerate}
\item MMSE-PIC algorithm; MMSE parallel interference cancellation  detector \cite{singer,haykin}. This detector subtracts \emph{a priori} estimates of the interfering symbols from the
received vector and then applies a linear MMSE estimator to obtain soft estimate of the symbols.
\item LISS algorithm ($|S|, |S_{x}|$); List sequential stack algorithm \cite{hagenauer2007}. It is characterized by the size of stack  $|S|$ and that of auxiliary stack $|S_{x}|$.
\item LFCSD algorithm ($N_{\mathcal{L}},N_{\mathcal{S}_{e}}$); List fixed complexity sphere decoder \cite{barbero2008}. A candidate list is found by the fixed complexity sphere decoder proposed in \cite{jalden2007}. This detector is characterized by $N_{\mathcal{L}}$ and $N_{\mathcal{S}_{e}}$, which represent the size of the candidate list and the number of paths fully extended, respectively.
\item ITS algorithm ($M$); Iterative tree search \cite{jong}. This detector uses the conventional $M$-algorithm to find the candidate list.
\end{enumerate}
Note that the parameters of the ISS-MA are remarked in ``ISS-MA ($M$, $N_l$)".
Although the LSD \cite{hochwald} and single (parallel) tree search (STS) \cite{jalden2005,studer} are considered as powerful detection
schemes, their complexities  grow so rapidly with problem size they are infeasible for the $12 \times 12$ system.
   For this reason, we only consider fixed-complexity detectors.
%
%
%
%
In Fig.~\ref{fig:bervscomp}, the performance and complexity of each algorithm  are drawn in the same plot to compare the performance-complexity trade-off of the detectors.
Due to linear structure, the MMSE-PIC achieves the lowest complexity among all candidates.
In addition, the performance of the MMSE-PIC is better than that of the LISS, LFCSD, and ITS.
This would be why the performance of the latter detectors depends on  candidate size and  the size is not large enough to achieve good performance in the $12 \times 12$ system.
In particular, due to the limited stack size, the stack memory used in the LISS easily becomes full before reaching a leaf of the tree so that the LISS often fails to find reliable candidates.
%
%
Fig.~\ref{fig:bervscomp} shows that only the ISS-MA can achieve the better performance than the MMSE-PIC.
Due to improved candidate selection process, the ISS-MA  finds reliable candidates only with small candidate size, thereby yielding the best BER performance while maintaining reasonable complexity.
In conclusion, the ISS-MA achieves the best performance-complexity trade-off among all tree detectors considered.
In addition, the ISS-MA provides performance gains over the MMSE-PIC at the expense of higher, but manageable complexity.

\section{Conclusions}
\label{sec:conclusion}
In this paper, we discussed a new path metric, which shows great promise in terms of its performance-complexity trade-off for soft-input soft-output tree detection in an IDD system.
By accounting for non-causal symbols in the \emph{linear estimate-based look-ahead} (LE-LA) path metric, the performance gains over the existing causal path metric are achieved.
We apply the LE-LA path metric to the soft-input soft-output $M$-algorithm.
By adopting the sorting mechanism exploiting the LE-LA path metric, we could improve the chance of selecting the correct path dramatically, thereby achieving good detection and decoding performance with fewer iterations.
From CPL probability analysis, we observed that the LE-LA path metric reflects the reliability of selected paths much better than the causal path metric.
Computer simulations confirm that the proposed ISS-MA can be a promising candidate for soft-input soft-output detection in high dimensional systems.
%

%
%

\appendices

\section{Proof of Theorem \ref{theorem:map}}
\label{app:map}
The transformed vector $\mathbf{y}$ can be expressed as $\mathbf{y} = \mathbf{R}\mathbf{x} + \mathbf{n}$, where  $\mathbf{n} = \mathbf{Q}_{1}\mathbf{n}_{o}$.
Let $k$ be the current layer being searched then
$\mathbf{y}$, $\mathbf{x}$, and
 $\mathbf{n}$ can be partitioned into two $(k-1) \times 1$ and $(N-k+1)\times1$ vectors, i.e.,
\begin{align}\label{eq:partition_appendix}
\mathbf{y} = \begin{bmatrix} \mathbf{y}_{1}^{k-1} \\ \mathbf{y}_{k}^{N} \end{bmatrix} = \begin{bmatrix}\mathbf{R}_{11,k} & \mathbf{R}_{12,k} \\ \mathbf{0} & \mathbf{R}_{22,k} \end{bmatrix}\begin{bmatrix} \mathbf{x}_{1}^{k-1} \\ \mathbf{x}_{k}^{N} \end{bmatrix} + \begin{bmatrix} \mathbf{n}_{1}^{k-1} \\ \mathbf{n}_{k}^{N} \end{bmatrix},
\end{align}
where the upper triangular matrix $\mathbf{R}$ is partitioned into four sub-matrices. Given the transmitted symbol $\mathbf{x}_{k}^{N}=\widetilde{\mathbf{x}}_{k}^{N}$, \emph{a posteriori} probability of $\mathbf{x}_{1}^{k-1}$  is given by
\begin{align} \label{eq:post1}
\ln & Pr\left(\mathbf{x}_{1}^{k-1} \Big| \mathbf{y}, \mathbf{x}_{k}^{N} = \widetilde{\mathbf{x}}_{k}^{N}\right) \\ =& \ln Pr\left(\mathbf{y}\Big|\mathbf{x}_{1}^{k-1}, \mathbf{x}_{k}^{N} = \widetilde{\mathbf{x}}_{k}^{N}\right) + \ln Pr(\mathbf{x}_{1}^{k-1})
\\   =&  -\ln\left(\sqrt{2\pi} \sigma_{n}\right) -\frac{1}{\sigma_{n}^{2}} \left\|\mathbf{y}  - \mathbf{R}\begin{bmatrix} \mathbf{x}_{1}^{k-1} \\ \widetilde{\mathbf{x}}_{k}^{N} \end{bmatrix}\right\|^{2} + \ln Pr\left(\mathbf{x}_{1}^{k-1}\right) \\
 =& -\ln\left(\sqrt{2\pi} \sigma_{n}\right)-\frac{1}{\sigma_{n}^{2}} \left\|
 \mathbf{y}_{1}^{k-1}  - \mathbf{R}_{12,k} \widetilde{\mathbf{x}}_{k}^{N} - \mathbf{R}_{11,k}\mathbf{x}_{1}^{k-1} \right\|^{2}   \label{eq:post_end}
   -\frac{1}{\sigma_{n}^{2}} \left\|\mathbf{y}_{k}^{N} - \mathbf{R}_{22,k} \widetilde{\mathbf{x}}_{k}^{N}\right\|^{2}    + \ln Pr\left(\mathbf{x}_{1}^{k-1}\right).
\end{align}
Hence, we can show that
\begin{align}
\check{\mathbf{x}}_{1}^{k-1} & = \arg \max \ln  Pr\left(\mathbf{x}_{1}^{k-1} \Big| \mathbf{y}, \mathbf{x}_{k}^{N} = \widetilde{\mathbf{x}}_{k}^{N}\right)\\
& = \arg \min_{\mathbf{x}_{1}^{k-1}} \left\|
 \mathbf{y}_{1}^{k-1} - \mathbf{R}_{12,k} \widetilde{\mathbf{x}}_{k}^{N}- \mathbf{R}_{11,k}\mathbf{x}_{1}^{k-1}  \right\|^{2} - \sigma_{n}^{2} \ln Pr\left(\mathbf{x}_{1}^{k-1}\right) \\ &= \arg \min_{\mathbf{x}_{1}^{k-1}}\sum_{i=1}^{k-1} \left( \left|y_{i} - \sum_{j=k}^{N}r_{i,j}\widetilde{x}_{j} - \sum_{j=i}^{k-1}r_{i,j}x_{j}  \right|^{2}\right. \left.- \sigma_{n}^{2}  \sum_{j=1}^{Q} \ln Pr\left(\widetilde{c}_{i,j}\right)\right)\\ \label{eq:proofmap}
 &= \arg \min_{\mathbf{x}_{1}^{k-1}} \sum_{i=1}^{k-1}  b\left(\mathbf{x}_{i}^{N}\right)
\end{align}
where the equation (\ref{eq:proofmap}) follows from the definition of the branch metric. Hence, for  $\mathbf{x}_{k}^{N} = \widetilde{\mathbf{x}}_{k}^{N}$, we have $\min_{\mathbf{x}_{1}^{k-1}} \sum_{i=1}^{k-1}  b\left(\mathbf{x}_{i}^{N}\right) = \sum_{i=1}^{k-1}  b\left(\mathbf{x}_{i}^{N}\right) \big|_{\mathbf{x}_{1}^{k-1} = \check{\mathbf{x}}_{1}^{k-1}} $.

\section{Proof of (\ref{eq:recursive})}
\label{app:recursive}
We can express $\mathbf{Z}_{k+1}$ in (\ref{eq:z2}) as
\begin{align}
& \mathbf{Z}_{k+1} = \sigma_{n}^{2}\left[\mathbf{R}_{11,k+1}\mathbf{\Lambda}_{k+1}(\mathbf{R}_{11,k+1})^{H} + \sigma_{n}^{2}\mathbf{I}\right]^{-1} \\
=& \sigma_{n}^{2} \left(\begin{bmatrix}\mathbf{R}_{11,k} & {\mathbf{r}}_{k+1}
\\ \mathbf{0} & r_{k+1,k+1}\end{bmatrix} \begin{bmatrix}\mathbf{\Lambda}_{k} & \mathbf{0}
\\ \mathbf{0} & \lambda_{k+1}\end{bmatrix}\begin{bmatrix}\mathbf{R}_{11,k} & {\mathbf{r}}_{k+1}
\\ \mathbf{0} & r_{k+1,k+1}\end{bmatrix}^{H}  + \begin{bmatrix} \sigma_{n}^{2}\mathbf{I}_{k} & \mathbf{0} \\ \mathbf{0} & \sigma_{n}^{2} \end{bmatrix}  \right)^{-1} \\ \label{eq:zbl3}
=& \sigma_{n}^{2} \left(\begin{bmatrix}\sigma_{n}^{2}\left(\mathbf{Z}_{k}\right)^{-1} + \lambda_{k+1}{\mathbf{r}}_{k+1}{\mathbf{r}}_{k+1}^{T} & \lambda_{k+1} r_{k+1,k+1} {\mathbf{r}}_{k+1} \\ \lambda_{k+1} r_{k+1,k+1} {\mathbf{r}}_{k+1}^{T} & \lambda_{k+1}r_{k+1,k+1}^{2} + \sigma_{n}^{2} \end{bmatrix}\right)^{-1}
\end{align}
To obtain the update formula, for partitioned matrices, $\mathbf{A}$ given by  \cite[\emph{Appendix 1.1.3}]{kai}
\begin{equation}
\mathbf{A} = \begin{bmatrix}\mathbf{A}_{11} & \mathbf{A}_{12} \\ \mathbf{A}_{21} & \mathbf{A}_{22} \end{bmatrix} \nonumber
\end{equation}
we have
\begin{align}
\mathbf{A}^{-1} = \begin{bmatrix}\left(\mathbf{A}_{11}-\mathbf{A}_{12}\mathbf{A}_{22}^{-1}\mathbf{A}_{21}\right)^{-1} & -
\left(\mathbf{A}_{11}-\mathbf{A}_{12}\mathbf{A}_{22}^{-1}\mathbf{A}_{21}\right)^{-1}\mathbf{A}_{12} \mathbf{A}_{22}^{-1}  \\ -\left(\mathbf{A}_{22}-\mathbf{A}_{21}\mathbf{A}_{11}^{-1}\mathbf{A}_{12}\right)^{-1}
\mathbf{A}_{21}\mathbf{A}_{11}^{-1} & \left(\mathbf{A}_{22}-\mathbf{A}_{21}\mathbf{A}_{11}^{-1}\mathbf{A}_{12}\right)^{-1} \end{bmatrix}. \nonumber
\end{align}
Let $\mathbf{A}_{11} = \sigma_{n}^{2}\left(\mathbf{Z}_{k}\right)^{-1} + \lambda_{k+1}{\mathbf{r}}_{k+1}{\mathbf{r}}_{k+1}^{T}$, $\mathbf{A}_{12} =\lambda_{k+1} r_{k+1,k+1} {\mathbf{r}}_{k+1} $, $\mathbf{A}_{21} =\lambda_{k+1} r_{k+1,k+1} {\mathbf{r}}_{k+1}^{T} $, and $\mathbf{A}_{22} = \lambda_{k+1}r_{k+1,k+1}^{2} + \sigma_{n}^{2}$, then   (\ref{eq:zbl3}) becomes
\begin{align}
\mathbf{Z}_{k+1} = \begin{bmatrix}\mathbf{Z}_{k} - K\lambda_{k+1}\mathbf{Z}_{k}{\mathbf{r}}_{k+1}
{\mathbf{r}}_{k+1}^{H}\mathbf{Z}_{k} & -K\lambda_{k+1}r_{k+1,k+1}\mathbf{Z}_{k}{\mathbf{r}}_{k+1} \\ -K\lambda_{k+1}r_{k+1,k+1}{\mathbf{r}}_{k+1}^{H}\mathbf{Z}_{k} & K\left(\lambda_{k+1}{\mathbf{r}}_{k+1}^{H}\mathbf{Z}_{k}{\mathbf{r}}_{k+1} + \sigma_{n}^{2}\right) \end{bmatrix},
\end{align}
where
\begin{equation}
K = \frac{1}{\lambda_{k+1}{\mathbf{r}}_{k+1}^{T}\mathbf{Z}_{k}
{\mathbf{r}}_{k+1} + \lambda_{k+1}r_{k+1,k+1}^{2} + \sigma_{n}^{2}}.
\end{equation}

\section{Proof of Lemma \ref{lemma:sinr_bound}}
\label{app:sinr_bound}
Let $\mathbf{\Sigma}_{k}$ be decomposed to $\mathbf{U}\mathbf{\Phi}_{k}\mathbf{U}^{H}$, where $\phi_{1} \geq \phi_{2} \geq \cdots \geq \phi_{k}$ are the eigenvalues of $\mathbf{\Sigma}_{k}$.
Then, the upper bound of the SINR is given by
\begin{align}
 {\rm SINR} =&\frac{1}{\sigma_{n}^{2}}\frac{\left(\mathbf{r}_{k}^{H}\left(\sigma_{n}^{4} \mathbf{\Sigma}_{k}^{-2}\right)\mathbf{r}_{k} + |r_{k,k}|^{2}\right)^{2}}{\mathbf{r}_{k}^{H}\left(\sigma_{n}^{6} \mathbf{\Sigma}_{k}^{-3}\right)\mathbf{r}_{k} + |r_{k,k}|^{2}}
\\ = & \frac{1}{\sigma_{n}^{2}}\frac{\left(\sum_{i=1}^{k-1} \sigma_{n}^{4}  \phi_{i}^{-2} |r_{i,k}^{'}|^{2} + |r_{k,k}|^{2}\right)^{2}}{\sum_{i=1}^{k-1} \sigma_{n}^{6}  \phi_{i}^{-3} |r_{i,k}^{'}|^{2} + |r_{k,k}|^{2}} \\\label{eq:cauchy1}
\leq & \sum_{i=1}^{k-1}   \phi_{i}^{-1} |r_{i,k}^{'}|^{2} + \frac{|r_{k,k}|^{2}}{\sigma_{n}^{2}} \\ \label{eq:cauchy2}
= & \mathbf{r}_{k}^{H}\mathbf{\Sigma}_{k}^{-1} \mathbf{r}_{k}+ \frac{|r_{k,k}|^{2}}{\sigma_{n}^{2}},
\end{align}
where $\mathbf{r}'_{k} = \left[r'_{1,k}, \cdots, r'_{k-1,k} \right]^{T} = \mathbf{U} \mathbf{r}_{k}$ and  (\ref{eq:cauchy1}) is from the \emph{Cauchy-Schwarz} inequality.

Next, with $A = \mathbf{r}_{k}^{H}\left(\sigma_{n}^{4} \mathbf{\Sigma}_{k}^{-2}\right)\mathbf{r}_{k} + |r_{k,k}|^{2}$ and $ B= \mathbf{r}_{k}^{H}\left(\sigma_{n}^{6} \mathbf{\Sigma}_{k}^{-3}\right)\mathbf{r}_{k} + |r_{k,k}|^{2}$, we can show
\begin{align}
A - B &= \mathbf{r}_{k}^{H}\left(\sigma_{n}^{4} \mathbf{\Sigma}_{k}^{-2} - \sigma_{n}^{6} \mathbf{\Sigma}_{k}^{-3}\right)\mathbf{r}_{k} \\
& = \sum_{i=1}^{k-1} \left(\sigma_{n}^{4}  \phi_{i}^{-2} - \sigma_{n}^{6}  \phi_{i}^{-3}\right)|r_{i,k}^{'}|^{2} \geq 0 \label{eq:pospp}
\end{align}
where (\ref{eq:pospp}) follows from $\sigma_{n}^{2}  \phi_{i}^{-1} = {\sigma_{n}^{2}}/{(\sigma_{n}^{2} + \epsilon)}  \leq 1$. Hence,
\begin{equation}
{\rm SINR} = \frac{1}{\sigma_{n}^{2}}\frac{A^{2}}{B} \geq \frac{1}{\sigma_{n}^{2}}A = \sigma_{n}^{2}\mathbf{r}_{k}^{H} \mathbf{\Sigma}_{k}^{-2}\mathbf{r}_{k} + \frac{|r_{k,k}|^{2}}{\sigma_{n}^{2}}.
\end{equation}
This becomes the lower bound of the SINR.

\section{Proof of Theorem \ref{theorem:random}}
\label{app:random}
Let  $\mathbf{H}_{1:k-1}$ be a matrix generated from the first $k-1$ columns of $\mathbf{H}$. Since $\mathbf{H}_{1:k-1} = \mathbf{Q} \begin{bmatrix}\mathbf{R}_{11,k} \\ \mathbf{0} \end{bmatrix}$,
the matrices  $\mathbf{R}_{11,k}\mathbf{R}_{11,k}^{H}$ and $\mathbf{H}_{1:k-1}^{H} \mathbf{H}_{1:k-1}$ share same eigenvalues.
For an i.i.d. random matrix $\mathbf{H}$, the elements of $\mathbf{r}_{k}$ are zero-mean and independent with variance of $\frac{1}{L}$.  According to \cite[\emph{Lemma 2.29}]{verdu_mono},
as $N,L \rightarrow \infty$ with $\beta = {L}/{N}$, $B_k^{\rm upper}$ converges almost surely to
   \begin{align} \label{eq:vm}
      B_k^{\rm upper} = \mathbf{r}_{k}^{H}\left(\sigma_{n}^{2}\mathbf{I} + \lambda_{\rm min} \mathbf{R}_{11,k}\mathbf{R}_{11,k}^{H}\right)^{-1}\mathbf{r}_{k} \longrightarrow
      \frac{\gamma\beta}{\sigma_{n}^{2}} \int_{0}^{\infty} \frac{1}{1+\frac{\lambda_{\rm min}}{\sigma_{n}^{2}}x} f_{\eta}(x) dx
      \end{align}
      where
      $f_{\eta}(x)$ is an empirical eigenvalue distribution of $\mathbf{H}_{1:k-1}^{H} \mathbf{H}_{1:k-1}$.
   According to the Marcenko-Pastur law \cite[{\it Theorem 2.35}]{verdu_mono}, as $N,L \rightarrow \infty$ with $\beta = {L}/{N}$, $f_{\eta}(x)$  converges almost surely to
   \begin{align} \label{eq:marcenko}
   f_{\eta}(x) \longrightarrow f_{\eta}^{o}(x)  = \frac{\sqrt{\left(x-(1-\sqrt{\gamma\beta})^2\right)^{+}\left((1+\sqrt{\gamma\beta})^2-x\right)^{+}}}{2\pi \gamma\beta x},
   \end{align}
   where $(x)^{+} = \max(0, x)$.
     Hence, from (\ref{eq:vm}) and (\ref{eq:marcenko}), we obtain
        \begin{align}
     B_k^{\rm upper} \longrightarrow \frac{\gamma\beta}{\sigma_{n}^{2}} \int_{0}^{\infty} \frac{1}{1+\frac{\lambda_{\rm min}}{\sigma_{n}^{2}}x} f_{\eta}^{o}(x) dx =  \frac{1}{2\lambda_{\rm min}} \left(-1 - (1 - \gamma\beta)\frac{\lambda_{\rm min}}{\sigma_{n}^{2}}  + G\left(\frac{\lambda_{\rm min}}{\sigma_{n}^{2}}, \gamma\beta \right)\right).
      \end{align}

In a similar manner, the lower bound converges to
  \begin{align} \label{eq:vm2}
      B_k^{\rm lower} = \sigma_{n}^{2} \mathbf{r}_{k}^{H}(\sigma_{n}^{2}\mathbf{I} + \lambda_{\rm max} \mathbf{R}_{11,k}\mathbf{R}_{11,k}^{H})^{-2} \mathbf{r}_{k}  \longrightarrow &
      \frac{\gamma\beta}{\sigma_{n}^{2}} \int_{0}^{\infty} \frac{1}{\left(1+\frac{\lambda_{\rm max}}{\sigma_{n}^{2}}x\right)^{2}} f_{\eta}^{o}(x) dx  \\
      & = \frac{1}{2\sigma_{n}^{2}} \left(- \left(1 - \gamma\beta\right) + \frac{1+\gamma \beta + (1 - \gamma \beta)^{2} \frac{\lambda_{\rm max}}{\sigma_{n}^{2}}}{G\left(\frac{\lambda_{\rm max}}{\sigma_{n}^{2}}, \gamma\beta \right)}\right).
      \end{align}

\section{Proof of Lemma \ref{lemma:scaling}}
\label{app:scaling}
Let $\eta_{1}, \eta_{2}, \cdots, \eta_{k-1}$ be the unordered eigenvalues of $\mathbf{R}_{11,k}\mathbf{R}_{11,k}^{H}$.
The scaling gain in (\ref{eq:avg}) can be expressed as
\begin{align} \label{eq:sc1}
E_{\mathbf{H}}\left[ \exp\left(-\frac{K}{2}\sigma_{n}^{2}\mathbf{r}_{k}^{H}\mathbf{\Sigma}_{k}^{-2}\mathbf{r}_{k}
   \right)\right] & \leq E_{\mathbf{H}}\left[ \exp\left(-\frac{K}{2}\sigma_{n}^{2}\mathbf{r}_{k}^{H}\left(\sigma_{n}^{2}\mathbf{I} + \lambda_{\rm max} \mathbf{R}_{11,k}\mathbf{R}_{11,k}^{H}\right)^{-2}\mathbf{r}_{k}
   \right)\right] \\
   &= E_{\mathbf{H}} \left[ \prod_{i=1}^{k-1}\exp\left(-\frac{K}{2}   \frac{\sigma_{n}^{2}}{\left(\lambda_{\rm max}\eta_{i}+\sigma_{n}^{2}\right)^{2}} |r_{i,k}|^{2}\right)
    \right] \\\label{eq:sc3}
   &=  E_{\mathbf{R}_{11,k}} \left[ \prod_{i=1}^{k-1} E_{r_{i,k}}\left[\left.\exp\left(-\frac{K}{2}   \frac{\sigma_{n}^{2}}{\left(\lambda_{\rm max}\eta_{i}+\sigma_{n}^{2}\right)^{2}} |r_{i,k}|^{2}\right) \right| \mathbf{R}_{11,k}\right]
    \right]   \\\label{eq:sc4}
   &= E_{\mathbf{R}_{11,k}}   \left[\prod_{i=1}^{k-1} \frac{1}{1+\frac{K}{2}\frac{\sigma_{n}^{2}}{\left(\lambda_{\rm max}\eta_{i}+\sigma_{n}^{2}\right)^{2}}}\right],
\end{align}
where (\ref{eq:sc3}) is from $E[x] = E[E[x|y]]$ and (\ref{eq:sc4}) follows from  $r_{i,k}$  being $\mathcal{CN}\left(0,1\right)$ and
independent of $\mathbf{R}_{11,k}$.
Let $\mathbf{H}_{1:k-1}$ be a matrix generated from the first $k-1$ columns of $\mathbf{H}$, then
the matrices  $\mathbf{R}_{11,k}\mathbf{R}_{11,k}^{H}$ and $\mathbf{H}_{1:k-1}^{H} \mathbf{H}_{1:k-1}$ share  same eigenvalues.
The pdf of the unordered eigenvalues of $\mathbf{H}_{1:k-1}^{H} \mathbf{H}_{1:k-1}$ is given by \cite{edelman}
\begin{align} \label{eq:jd}
f_{\eta_{1},  \cdots, \eta_{k-1}}\left(x_{1}, \cdots, x_{k-1}\right) = \frac{1}{(k-1)!}\exp\left(-\sum_{i=1}^{k-1} x_{i}\right) \prod_{i=1}^{k-1}\frac{x_{i}^{L-k+1}}{\left(k-1-i\right)!(L-i)!} \prod_{i<j}^{k-1} \left(x_{i} - x_{j}\right)^{2},
\end{align}
which completes the proof.

\clearpage

\clearpage

\begin{figure} [!]
  \subfigure[]
  {\epsfig{figure=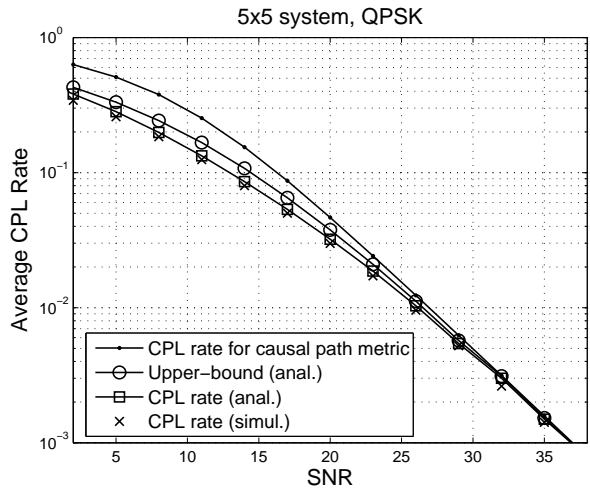,width=88mm}}
  \subfigure[]
  {\epsfig{figure=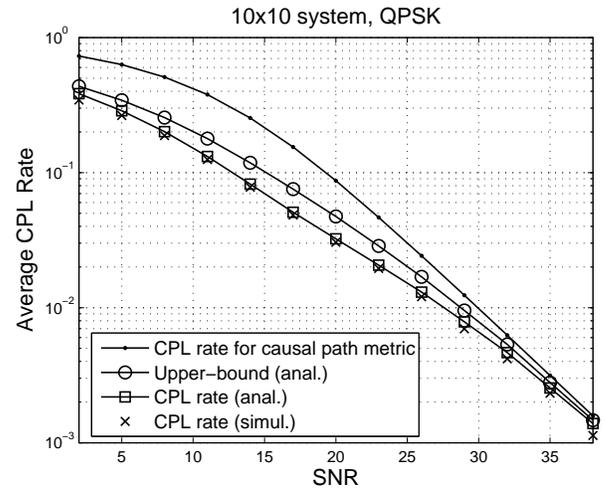,width=88mm}}
  \subfigure[]
  {\epsfig{figure=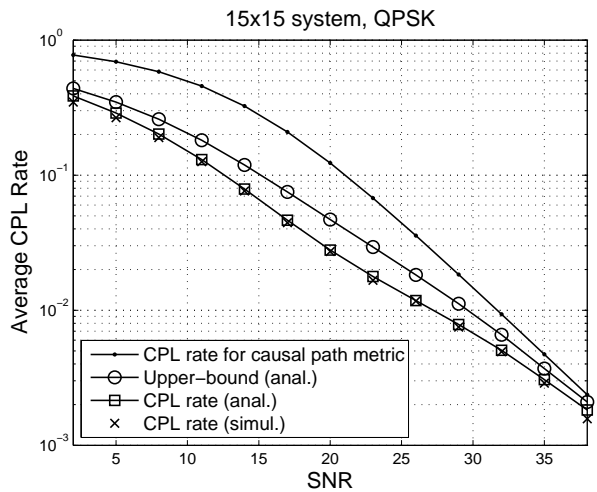,width=88mm}}
  \subfigure[]
  {\epsfig{figure=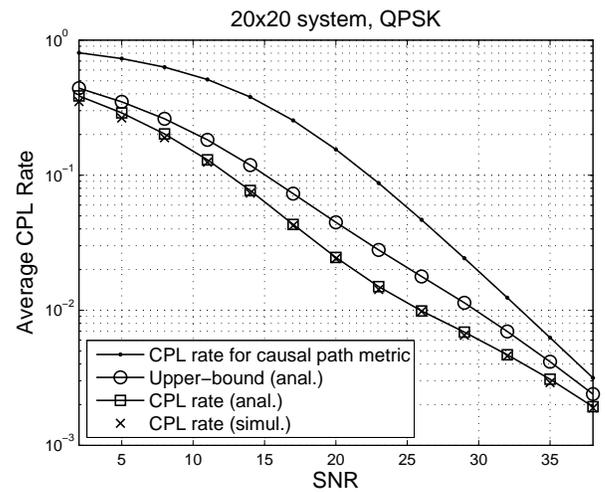,width=88mm}}
   \caption {Average CPL probability versus SNR for the (a) $5 \times 5$, (b) $10 \times 10$, (c) $15 \times 15$, and (d)  $20 \times 20$ systems. QPSK uncoded transmission is considered.
   The curves for the ``CPL rate (anal.)" are obtained by Monte-Carlo averaging of (\ref{eq:exact}) over i.i.d. Gaussian channels.
   } \label{fig:anal}
\end{figure}

\clearpage

\begin{figure}[!]
\centering
\centerline{\epsfig{figure=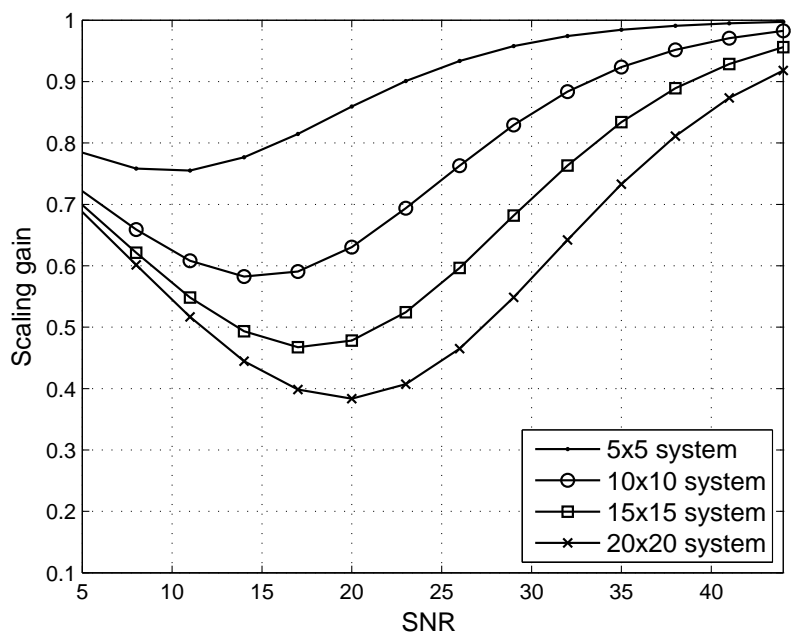,width=120mm}}
\caption{Scaling gain versus SNR for different system sizes $N=5,10,15,$ and $20$.} \label{fig:scale}
\end{figure}

\clearpage

\begin{figure} [!]
  \centering
  \subfigure[]
  {\epsfig{figure=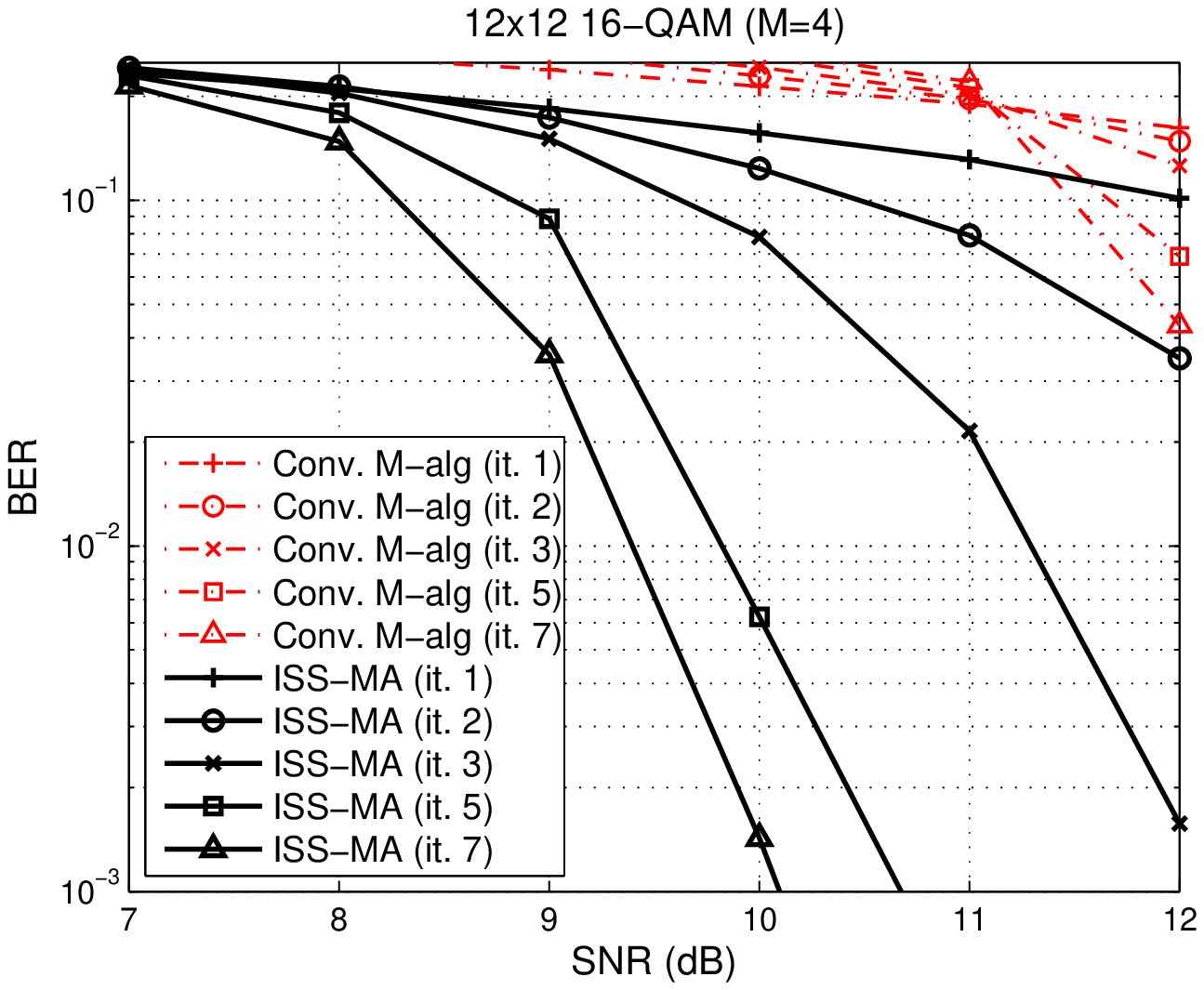,width=80mm}}
  \subfigure[]
  {\epsfig{figure=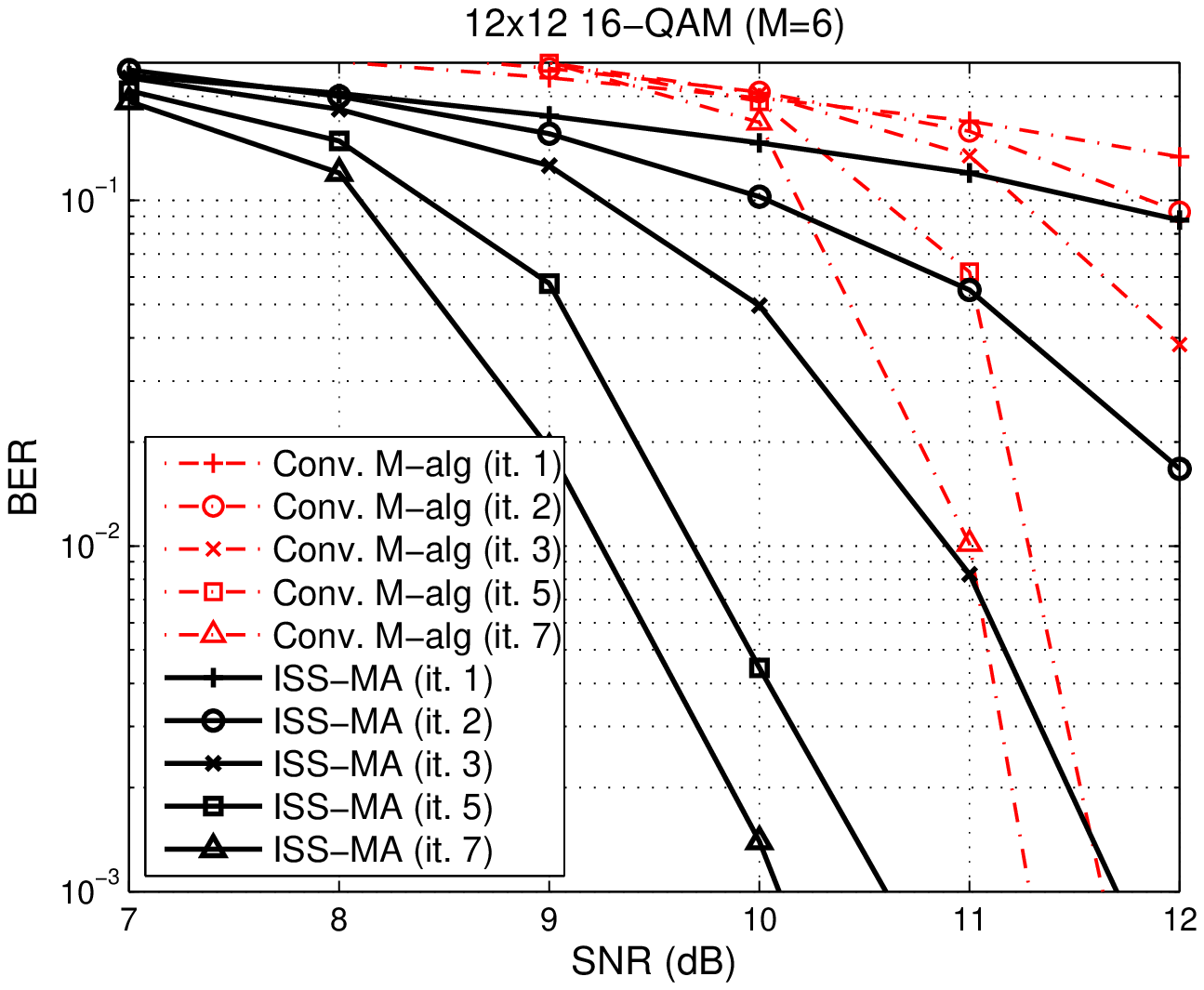,width=80mm}}
   \subfigure[]
  {\epsfig{figure=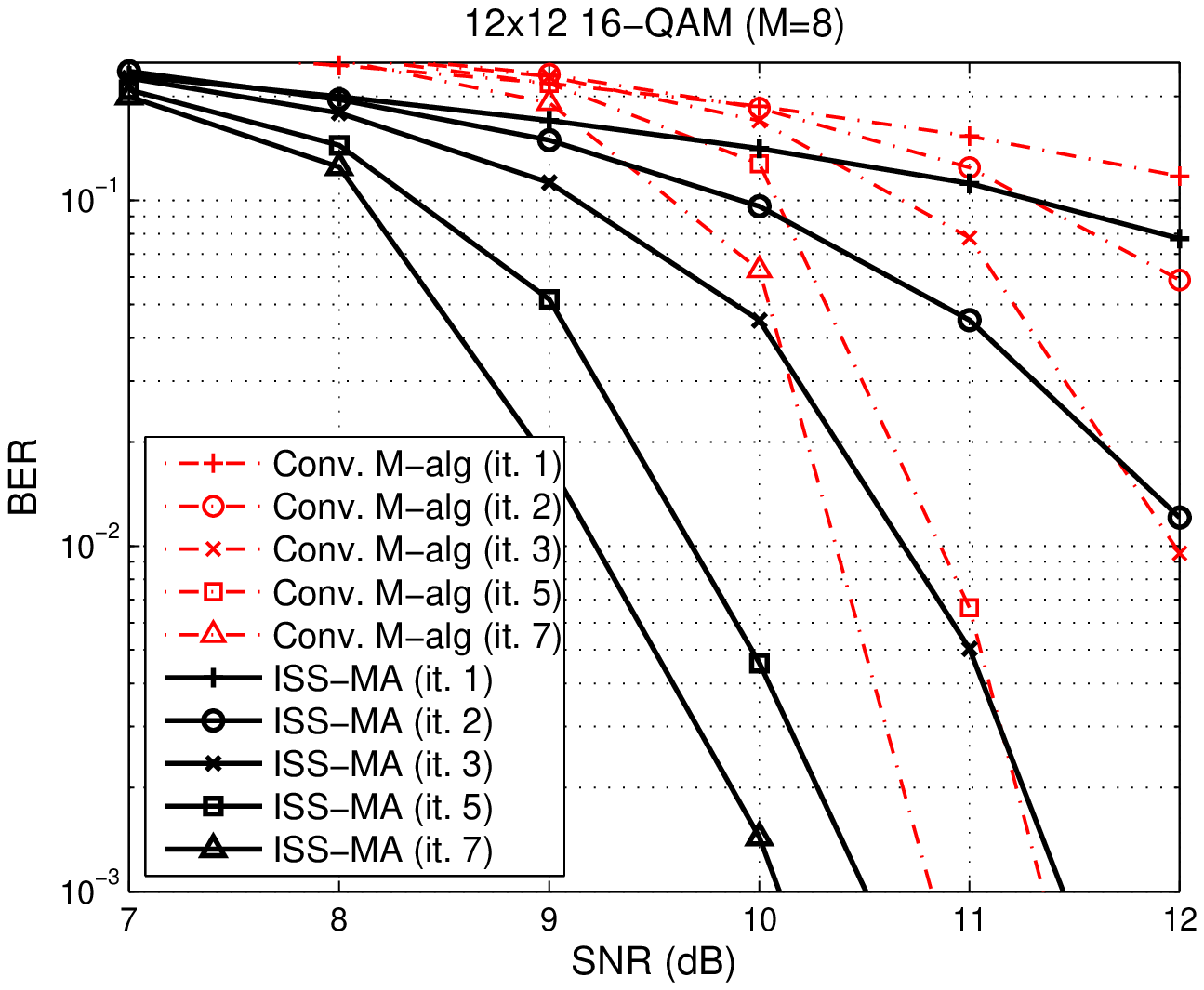,width=80mm}}
   \subfigure[]
  {\epsfig{figure=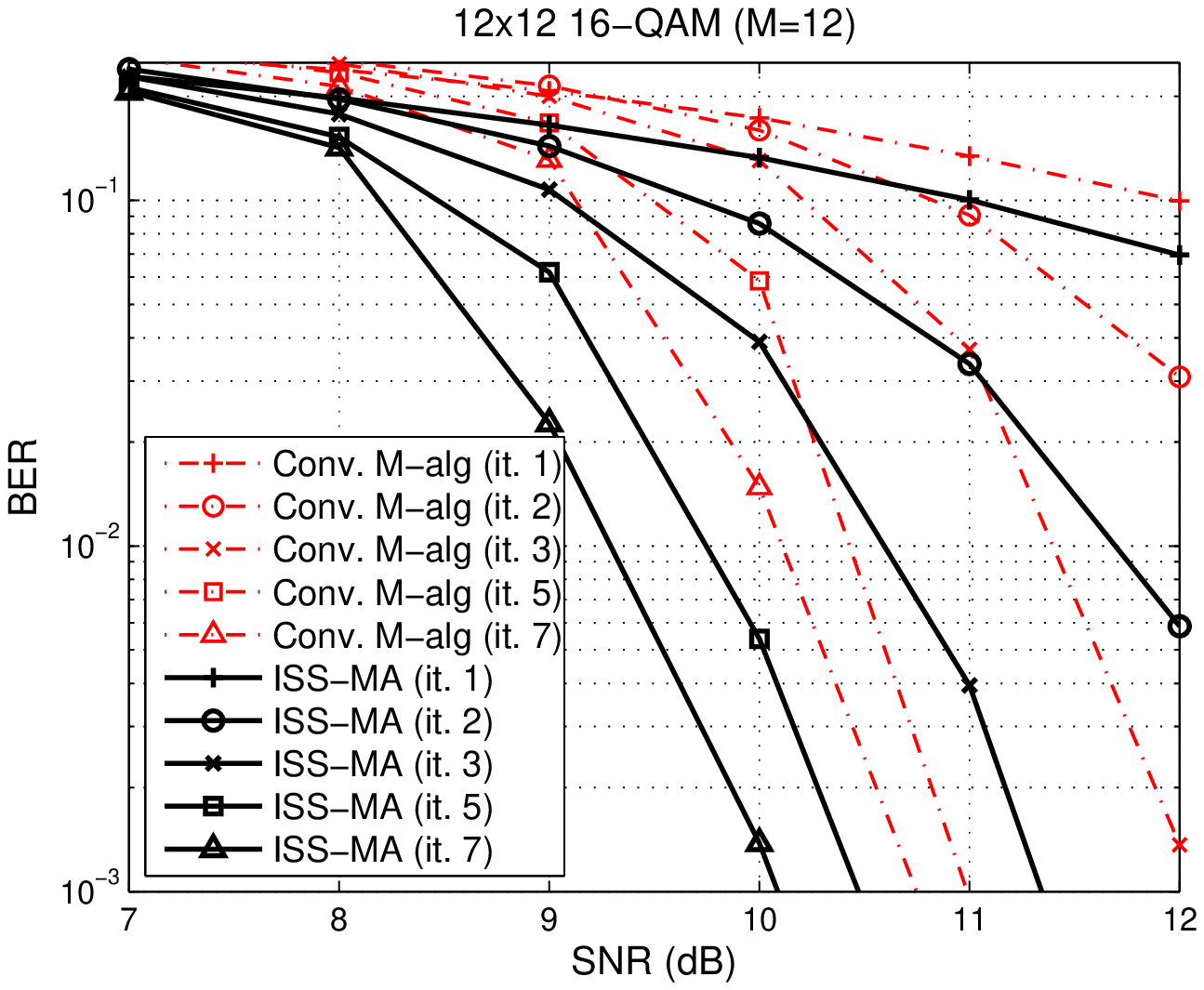,width=80mm}}
    \caption {Comparison between the causal path metric and LE-LA path metric for  the $12 \times 12$ 16-QAM system with
    (a) $M=4$, (b) $M=6$, (c) $M=8$, and (d) $M=12$.
   } \label{fig:bervsm}
\end{figure}

\clearpage
\begin{table*}[!]
\caption{Performance/complexity of $12 \times 12$ 16-QAM system for different $M$ values.}
\begin{center}
\resizebox{16cm}{!}{
\begin{tabular}{c||r|r|r|r}
\hline
 & \multicolumn{2}{|c|}{LE-LA path metric}  &  \multicolumn{2}{|c}{Conventional path metric}   \\ \hline
  & SNR  (at BER = 1\%)  & \# of multiplications  & SNR  (at BER = 1\%)   & \# of multiplications  \\ \hline
$M=4$ & \textbf{9.40} dB   & \textbf{133.77}k & 12.50 dB & 120.23k \\ \hline
$M=6$ & 9.25 dB  & 145.30k &  11.00 dB &  125.50k \\ \hline
$M=8$ & 9.22 dB  & 157.53k  & \textbf{10.36} dB  & \textbf{132.00}k \\ \hline
$M=12$ & 9.29 dB  & 183.12k  & 10.11 dB  & 145.98k \\ \hline
\end{tabular}
}
\end{center}
\label{tb:comp1}
\end{table*}

\begin{table*}[!]
\caption{Performance/complexity for different problem sizes. $M$ is set to 6.}
\begin{center}
\resizebox{16cm}{!}{
\begin{tabular}{c||r|r|r|r}
\hline
 & \multicolumn{2}{|c|}{LE-LA path metric}  &  \multicolumn{2}{|c}{Conventional path metric}   \\ \hline
  & SNR  (at BER = 1\%)  & \# of multiplications  & SNR  (at BER = 1\%)   & \# of multiplications  \\ \hline
$6\times 6$ 16-QAM & 8.80 dB   & 24.30k & 9.29 dB & 18.79k \\ \hline
$8\times 8$ 16-QAM & 8.97 dB  & 50.79k &  9.59 dB &  40.87k \\ \hline
$10\times 10$ 16-QAM & 9.22 dB  & 90.07k  & 10.39 dB  & 75.19k \\ \hline
$12\times 12$ 16-QAM & 9.25 dB  & 145.30k   & 10.11 dB  & 125.50k \\ \hline
\end{tabular}
}
\end{center}
\label{tb:comp2}
\end{table*}

%

\begin{figure} [!]
  \centering
  \subfigure[]
  {\epsfig{figure=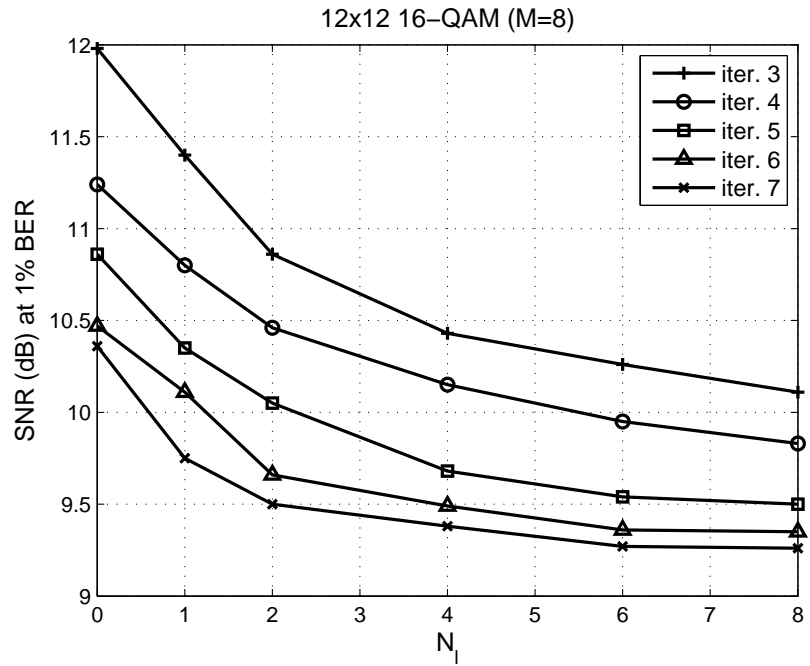,width=120mm}}
   \subfigure[]
  {\epsfig{figure=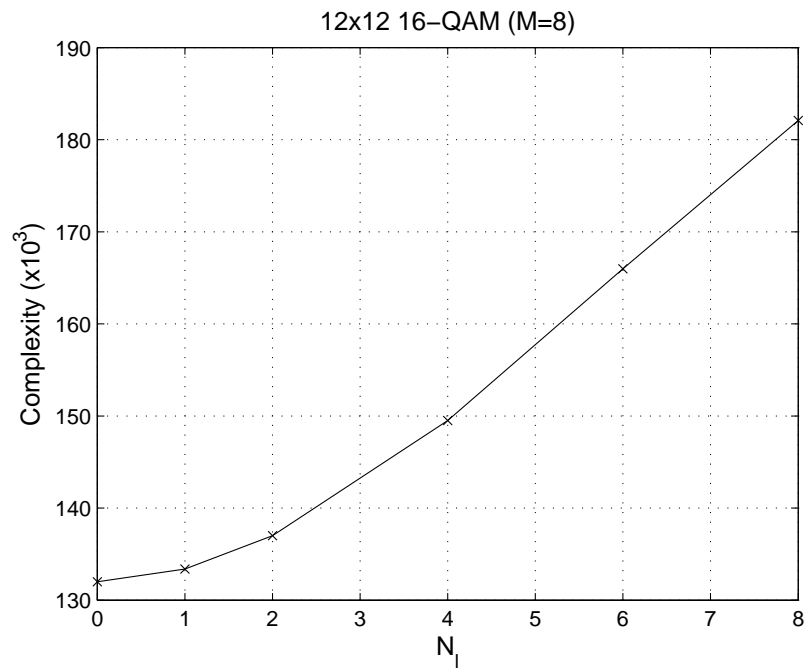,width=120mm}}
  \caption {(a) Performance and (b) complexity of the ISS-MA versus $N_l$ for $12 \times 12$ 16-QAM system ($M$ is set to 8).
   } \label{fig:nt}
\end{figure}

\clearpage

\begin{figure}[!]
\centering
\centerline{\epsfig{figure=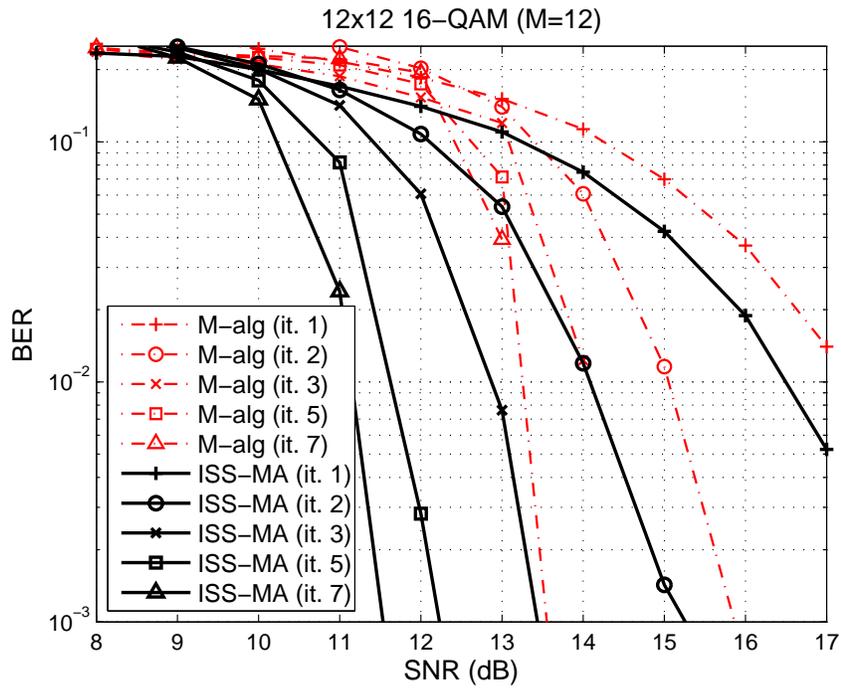,width=125mm}}
\caption{Comparison of the ISS-MA and conventional $M$-algorithm for spatially correlated MIMO channels.} \label{fig:correlated}
\end{figure}

\clearpage

\begin{figure}[!]
\centering
\centerline{\epsfig{figure=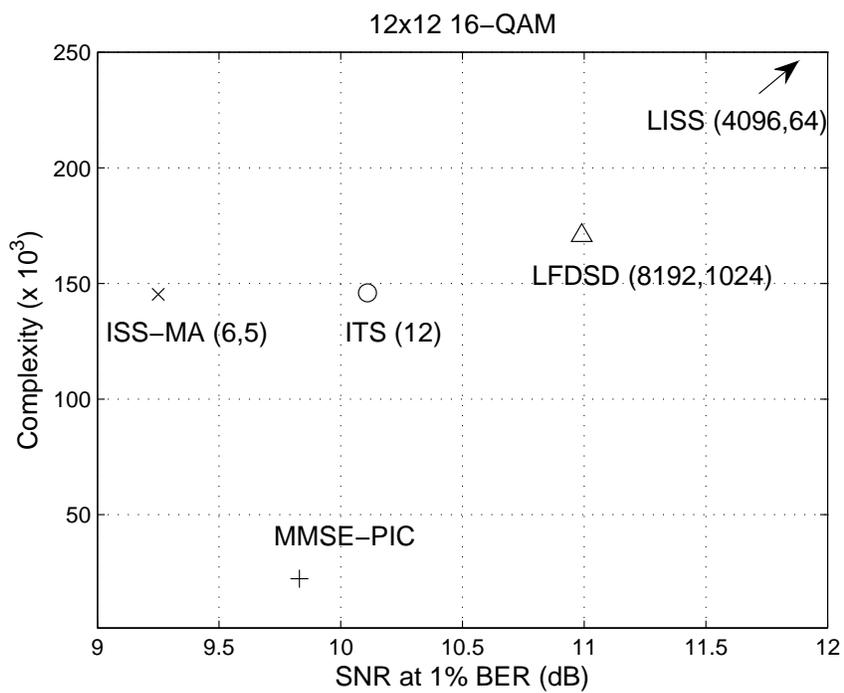,width=125mm}}
\caption{Comparison of the several soft-input soft-output detectors. The numbers in the parenthesis represent the parameters of
the detectors.} \label{fig:bervscomp}
\end{figure}


\end{document}